\DeclareMathOperator{\rep}{rep}
\def\eint#1#2{[\![#1,#2]\!]}
\theoremstyle{plain}
\newtheorem{theorem}{Theorem}
\newtheorem{lemma}[theorem]{Lemma}
\newtheorem{corollary}[theorem]{Corollary}
\newtheorem{proposition}[theorem]{Proposition}
\theoremstyle{definition}
\newtheorem{definition}[theorem]{Definition}
\newtheorem{remark}[theorem]{Remark}
\newtheorem{example}[theorem]{Example}
\title{Deciding game invariance}
\author{Eric Duch\^ene}
\address{Universit\'{e} Lyon 1, LIRIS, UMR5205, F-69622, France,
{\tt eric.duchene@univ-lyon1.fr}}
\author{Aline Parreau}\thanks{The second author is supported by a FNRS post-doctoral grant at the University of Li{\`e}ge.}
\author{Michel Rigo}
\address{Department of Mathematics, University of Li{\`e}ge, Grande traverse 12 (B37), B-4000 Li{\`ege}, Belgium. {\tt M.Rigo@ulg.ac.be}}
\begin{document}

\begin{abstract}
In \cite{DucRig}, Duch\^ene and Rigo introduced the notion of invariance for take-away games on heaps. Roughly speaking, these are games whose rulesets do not depend on the position. Given a sequence $S$ of positive tuples of integers, the question of whether there exists an invariant game having $S$ as set of $\mathcal{P}$-positions is relevant. In particular, it was recently proved by Larsson et al. \cite{Lar} that if $S$ is a pair of complementary Beatty sequences, then the answer to this question is always positive. In this paper, we show that for a fairly large set of sequences (expressed by infinite words), the answer to this question is decidable.
\end{abstract}

\maketitle

\section{Introduction}

Let $n\ge 1$ be an integer. In this paper, we consider take-away impartial games played over $n$ piles of tokens. Two players alternatively remove a positive number of tokens from one or several piles following a prescribed ruleset. The rules are the same for both players. We assume normal convention, i.e., the player making the last move wins. Since we always remove a positive number of tokens, the game is acyclic and there is always a winner.

A {\em position} of such a game is an $n$-tuple of non-negative integers which corresponds to the number of tokens available in each pile. A {\em move} is also an $n$-tuple of non-negative integers corresponding to the number of tokens that are removed from each pile. Let $\mathbf{p}=(p_1,\ldots,p_n)$ be a position and $\mathbf{m}=(m_1,\ldots,m_n)$ be a non-zero move. The move $\mathbf{m}$ can be applied to the position $\mathbf{p}$ provided that $\mathbf{m}\le \mathbf{p}$, i.e., for all $i$, $m_i\le p_i$. The position resulting of the application of $\mathbf{m}$ is the $n$-tuple $\mathbf{p}-\mathbf{m}$.

\begin{definition}
A {\em game}, played over $n$ piles, is given by a function $G:\mathbb{N}^n\to 2^{\mathbb{N}^n}$ that maps every position $\mathbf{p}$ to a set of moves that can be chosen from $\mathbf{p}$ by the player. Otherwise stated, the ruleset is provided by the map $G$. For a position $\mathbf{p}$, the set of {\em options} of $\mathbf{p}$ is the set $\{\mathbf{p}-\mathbf{m} \mid \mathbf{m}\in G(\mathbf{p})\}$ of positions where the player can move directly.
A {\em strategy} consists in choosing a particular option for every position.
\end{definition}
An interval of integers is denoted by $\eint{k}{\ell}$.
For an example of take-away game, the game of Nim over $2$ piles is described by the map $$G_{\mathrm{NIM}}:\mathbb{N}^2\to 2^{\mathbb{N}^2}, (x,y)\mapsto \{(i,0)\mid i\in \eint{1}{x}\}\cup \{(0,j)\mid j\in\eint{1}{y}\}.$$
For Wythoff's game, the description is given by $$G_{\mathrm{WYTHOFF}}:\mathbb{N}^2\to 2^{\mathbb{N}^2},
(x,y)\mapsto G_{\mathrm{NIM}}(x,y)\cup \{(k,k)\mid k\in\eint{1}{\min \{x,y\}}\}.$$
With such a formal presentation, we recall the notion of invariant game introduced in \cite{DucRig}. Note that we shall later on distinguish two notions of invariance: invariant games and invariant subsets.
\begin{definition}
A game $G:\mathbb{N}^n\to 2^{\mathbb{N}^n}$ is {\em invariant} if there exists a set $I\subseteq\mathbb{N}^n$ such that, for all positions $\mathbf{p}$, we have
$$G(\mathbf{p})=I\cap \{\mathbf{m}\in\mathbb{N}^n\mid \mathbf{m}\le \mathbf{p}\}.$$
Otherwise stated, we may apply exactly the same moves to every position, with the only restriction that there are enough tokens left. Since a game is defined by its moves, formally by the map $G$, one also speaks of {\em invariant moves}.
\end{definition}

A motivation to introduce the notion of invariance is the relative simplicity of the corresponding rulesets. Roughly speaking, one has ``just'' to remember the set $I$.

The game of Nim defined above is invariant. Simply consider the set
$$I_{\mathrm{NIM}}=\{(i,0)\mid i\ge 1\}\cup \{(0,j)\mid j\ge 1\}.$$
Similarly, Wythoff's game is invariant with the set
$$I_{\mathrm{WYTHOFF}}=I_{\mathrm{NIM}}\cup\{(k,k)\mid k\ge 1\}.$$

For an example of non-invariant game, consider the following map,
$$G_{\mathrm{EVEN}}:\mathbb{N}^2\to 2^{\mathbb{N}^2}, (x,y)\mapsto
\begin{cases}
    \{(i,0)\mid i\in\eint{1}{x}\}, & \text{ if }x+y\text{ is even};\\
    \{(i,i)\mid i\in\eint{1}{\min \{x,y\}}\},& \text{ otherwise}.\\
\end{cases}$$
Here, the moves that can be applied from a position $(x,y)$ depend on the position itself.

Recently, Fraenkel and Larsson introduced a generalization of this notion of invariance \cite{FraLar}.

\begin{definition}
Let $t\ge 1$ be an integer. A game $G:\mathbb{N}^n\to 2^{\mathbb{N}^n}$ is {\em $t$-invariant} if the set of positions can be partitioned into $t$ subsets $S_1,\ldots,S_t$ and there exist $t$ sets $I_1,\ldots,I_t\subseteq\mathbb{N}^n$ such that, for all positions $\mathbf{p}$,
$$\text{if }\mathbf{p}\in S_j\text{, then }G(\mathbf{p})=I_j\cap \{\mathbf{m}\in\mathbb{N}^n\mid \mathbf{m}\le \mathbf{p}\}.$$
\end{definition}

In particular, an invariant game is $1$-invariant.
\begin{example}
The game $G_{\mathrm{EVEN}}$ is clearly $2$-invariant. One considers the partition of $\mathbb{N}^2$ into $S_1=\{(x,y)\mid x+y\text{ is even}\}$ and $S_2=\{(x,y)\mid x+y\text{ is odd}\}$.
\end{example}

Note that there exist some games which are not $t$-invariant for any $t$.
\begin{example}
The game
$$G_{\mathrm{MARK}}:\mathbb{N}\to 2^{\mathbb{N}}, x\mapsto
    \left\{1,\lceil x/2\rceil\right\}
$$
defined in \cite{mark} is not $t$-invariant for any $t$.
\end{example}

It is classical to associate a set of $\mathcal{P}$-positions with a game.

\begin{definition}
    A position $\mathbf{p}\in\mathbb{N}^n$ is a {\em $\mathcal{P}$-position} if there exists a strategy for the second player
(i.e., the player who will play on the next round) to win the
game, whatever the move of the first player is. We let $\mathcal{P}(G)$ denote the set of $\mathcal{P}$-positions of the game $G$. Conversely, $\mathbf{p}$ is an
{\em $\mathcal{N}$-position} if there exists a winning strategy for the
first player (i.e., the one who is making the current move).
\end{definition}

The characterization of the set of $\mathcal{P}$-positions
of an impartial acyclic game is well-known.

\begin{proposition}\label{pro:noyau} The sets of $\mathcal{P}$- and $\mathcal{N}$-positions
of an
    impartial acyclic game are uniquely determined
by the
    following two properties:
\begin{itemize}
  \item Every move from a $\mathcal{P}$-position leads to an $\mathcal{N}$-position
    (stability property of the set of $\mathcal{P}$-positions).
  \item From every $\mathcal{N}$-position, there exists a move leading to a $\mathcal{P}$-position (absorbing property of the set of $\mathcal{P}$-positions).
\end{itemize}
\end{proposition}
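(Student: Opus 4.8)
The plan is to exploit the acyclicity of the game, which supplies a well-founded induction principle: every move strictly decreases the total number of tokens $|\mathbf{p}| = p_1 + \cdots + p_n$, so every option $\mathbf{q}$ of a position $\mathbf{p}$ satisfies $|\mathbf{q}| < |\mathbf{p}|$, and we may argue by strong induction on $|\mathbf{p}|$. The statement has two halves: first, that the sets $\mathcal{P}(G)$ and $\mathcal{N}(G)$ \emph{do} satisfy the two displayed properties (existence), and second, that \emph{no other} partition of the positions does (uniqueness).

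For existence, I would argue directly from the strategy-based definition. If $\mathbf{p}\in\mathcal{P}(G)$ admitted a move to some $\mathbf{q}\in\mathcal{P}(G)$, then the first player could reach $\mathbf{q}$ and thereby put the opponent in a losing position, contradicting $\mathbf{p}\in\mathcal{P}(G)$; hence every option of a $\mathcal{P}$-position lies in $\mathcal{N}(G)$, which is the stability property. Conversely, if $\mathbf{p}\in\mathcal{N}(G)$, the first player has a winning strategy whose first move necessarily leads to a position from which the opponent loses, i.e.\ to a $\mathcal{P}$-position; this is the absorbing property.

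For uniqueness, I would show that the two properties force a purely recursive description of membership. Suppose $(P,N)$ is any partition satisfying both properties. I claim that, for every $\mathbf{p}$, one has $\mathbf{p}\in P$ if and only if none of the options of $\mathbf{p}$ lies in $P$. Indeed, stability gives the forward implication (all options of a $P$-position are in $N$), while the absorbing property gives the contrapositive of the converse (a position in $N$ always has an option in $P$). This characterization is exactly a well-founded recursion on $|\mathbf{p}|$: the membership of $\mathbf{p}$ is determined by the memberships of its strictly smaller options, the base case being terminal positions (no applicable move), which must belong to $P$. Consequently, if $(P_1,N_1)$ and $(P_2,N_2)$ both satisfy the properties, an induction on $|\mathbf{p}|$ shows that $P_1$ and $P_2$ agree on all options of $\mathbf{p}$, hence on $\mathbf{p}$ itself; so the partition is unique, and by the existence part it must coincide with $(\mathcal{P}(G),\mathcal{N}(G))$.

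The part requiring the most care is the setup of the well-founded induction rather than any single step: one must check that the reachable set from each position is finite (it lies inside the finite box $\{\mathbf{m}\mid \mathbf{m}\le \mathbf{p}\}$), so that winning strategies and the notions of $\mathcal{P}$- and $\mathcal{N}$-position are well defined, and one must correctly treat terminal positions as the base case. Acyclicity is genuinely essential here: it is precisely what guarantees both that the strategy-based definition terminates and that the recursive rule above admits a unique solution.
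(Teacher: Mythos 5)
The paper itself gives no proof of this proposition: it is introduced as ``well-known'' and stated without argument, so there is no internal proof to compare against. Your proposal supplies the standard argument, and it is correct. You split the claim into existence (the strategy-defined sets $\mathcal{P}(G)$ and $\mathcal{N}(G)$ satisfy stability and absorption) and uniqueness (any partition $(P,N)$ satisfying both properties obeys the rule ``$\mathbf{p}\in P$ if and only if no option of $\mathbf{p}$ lies in $P$'', which has a unique solution by well-founded recursion), and your identification of acyclicity---every move strictly decreases $p_1+\cdots+p_n$, so strong induction on the token count is available---as the engine of the whole proof is exactly right, as is the treatment of terminal positions as the base case. One point deserves to be more explicit than your phrase ``well defined'': the existence half tacitly uses that $\mathcal{P}(G)$ and $\mathcal{N}(G)$ partition the set of positions, since from ``no option of a $\mathcal{P}$-position is a $\mathcal{P}$-position'' you conclude that every such option is an $\mathcal{N}$-position. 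That every position is exactly one of the two (Zermelo-type determinacy for finite acyclic games) is proved by the very same induction on $|\mathbf{p}|$ you set up---a position is an $\mathcal{N}$-position if and only if some option is a $\mathcal{P}$-position---so the fix costs one sentence, but as written it is an implicit appeal rather than a proved step.
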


\begin{remark}
    Two different games $G$ and $H$ can be such that $\mathcal{P}(G)=\mathcal{P}(H)$. For an example, the game
$$H:\mathbb{N}^2\to 2^{\mathbb{N}^2},$$ 
$$(x,y)\mapsto G_{\mathrm{WYTHOFF}}(x,y) \cup\bigl(\{(3,1),(5,1),(6,1)\}\cap\{(i,j)\mid i\le x, j\le y\}\bigr)$$
is such that $\mathcal{P}(G_{\mathrm{WYTHOFF}})=\mathcal{P}(H)$. See \cite{FraNow} for details about the moves that can be adjoined to Wythoff's game without modifying the set of $\mathcal{P}$-positions. Here $G_{\mathrm{WYTHOFF}}$ and $H$ are both invariant games, but one can also imagine variant games leading to the same set of $\mathcal{P}$-positions such as
$$      H':\mathbb{N}^2\to 2^{\mathbb{N}^2},$$ 
$$(x,y)\mapsto
        \begin{cases}
G_{\mathrm{WYTHOFF}}(x,y) \cup\left(\{(5,1),(6,1)\}\cap\{(i,j)\mid i\le x, j\le y\}\right), \text{ if }x\text{ even};\\
G_{\mathrm{WYTHOFF}}(x,y) \cup\left(\{(1,3),(1,5)\}\cap\{(i,j)\mid i\le x, j\le y\}\right), \text{ if }x\text{ odd}.\\
        \end{cases}$$
Again, one can check that $\mathcal{P}(G_{\mathrm{WYTHOFF}})=\mathcal{P}(H')$. A few extra moves are adjoined to the usual Wythoff moves. These moves do not modify the set of $\mathcal{P}$-positions. But observe that the added moves depend on the parity of $x$, thus this game is not invariant. For other variant games having $\mathcal{P}(G_{\mathrm{WYTHOFF}})$ as set of $\mathcal{P}$-positions, see \cite{Bo}. Observe that $H'$ is an example of $2$-invariant game.
\end{remark}

This leads to the following definition. Note that we therefore have two notions of invariance: one for games and one for sets.

\begin{definition}
    A subset $X$ of $\mathbb{N}^n$ is {\em $t$-invariant} if there exists a $t$-invariant game $G$ such that $X=\mathcal{P}(G)$.
\end{definition}

In this paper, we deal with the question of {\em $t$-invariance} of subsets of $\mathbb{N}^n$. We first show that every subset of $\mathbb{N}^n$, which contains $\mathbf{0}=(0,\ldots,0)$, is $2$-invariant (Theorem~\ref{the:inv_ss}). Thus the general question addressed in this paper is the following one. Given a subset $X$ of $\mathbb{N}^n$, is $X$ $1$-invariant ? In \cite{Lar}, the authors proved that if $X$ is a pair of complementary homogeneous Beatty sequences, then $X$ is $1$-invariant. Recently, the case of non-homogeneous Beatty sequences was investigated \cite{CaDR}: a partial characterization is given for sets which are $1$-invariant. In the current paper, using the formalism of first-order logic, we show that for a wide range of sets (not only Beatty sequences), the problem turns out to be decidable. More precisely, this problem is decidable for sets that are definable in the Presburger arithmetic $\langle\mathbb{N},+\rangle$ extended with a unary map $V_U$ related to expansions of integers in a numeration system. The precise framework is given in Section~\ref{sec:3}. As a particular case, we will also consider sets of $\mathcal{P}$-positions of existing variant games, such as the rat game and the mouse game \cite{Rat}, the Tribonacci game \cite{tribo}, Pisot cubic games \cite{pisot}, Mark \cite{mark}, etc.


As an example, consider the Tribonacci game $G_{\mathrm{TRIBO}}:\mathbb{N}^3\to 2^{\mathbb{N}^3}$, played over three piles of tokens, where the variant rules are described in \cite{tribo}. Its set of $\mathcal{P}$-positions is coded by the Tribonacci word $T$
\begin{equation}
    \label{eq:tw}
    T=12131211213121213121121312131211213121213121\cdots,
\end{equation}
which is the unique fixed point of the morphism over $\{1,2,3\}^*$ given by $f:1\mapsto 12,2\mapsto 13,3\mapsto 1$. If letters of $T$ are indexed by positive integers, the $m$th $\mathcal{P}$-position is given by the index of the $m$th occurrence of the letter $1$, $2$ and $3$ respectively. The first few $\mathcal{P}$-positions in $\mathcal{P}(G_{\mathrm{TRIBO}})$ are
$$(1, 2, 4), (3, 6, 11), (5, 9, 17), (7, 13, 24), (8, 15, 28), (10, 19, 35), (12, 22, 41),\ldots.$$
Up to implementation, our main result shows that it is decidable whether $\mathcal{P}(G_{\mathrm{TRIBO}})$ is $1$-invariant.  For an introduction to combinatorics on (infinite) words, see, for instance, \cite{cant,Lot}. In this paper, we only use a few properties and definitions about words. For the Tribonacci word, one has to consider the sequence of finite words $(f^n(1))_{n\ge 1}$ which converges to $T$. We also assume that the reader has some knowledge about automata theory. See for instance \cite{saka}. Indeed, in Section~\ref{sec:recsyn}, we recall that the first-order logical formalism is equivalent to a representation in terms of languages accepted by finite automata.

This paper is organized as follows. In Section~\ref{sec:2}, we quickly show that every subset of $\mathbb{N}^n$ containing $\mathbf{0}=(0,\ldots,0)$ is $2$-invariant.  In Section~\ref{sec:3}, we describe the formalism of Pisot numeration systems and the corresponding first-order definable sets. We explain in Section~\ref{sec:4} how this formalism leads to a decision procedure about the $1$- or $2$-invariance of subsets of $\mathbb{N}^n$. Next, in Section~\ref{sec:recsyn}, we reformulate our result in terms of sets recognizable by means of finite automata.
Section~\ref{sec:5} is dedicated to applications of this procedure. We consider several classical games. We can decide the $1$-invariance of the set of $\mathcal{P}$-positions of $2$-heap games defined in \cite{Fra:98}, (generalized) Tribonacci game \cite{tribo,pisot}, Raleigh game \cite{Raleigh}, games coded by periodic words or by Parikh-constant morphic words. Note that in the periodic case, we also provide an independent algorithm.

\section{Invariant set of positions}\label{sec:2}

If $A,B$ are subsets of $\mathbb{N}^n$, $A+B$ (resp., $A-B$) denotes the set $\{\mathbf{a}+\mathbf{b}\mid \mathbf{a}\in A,\mathbf{b}\in B\}$ (resp., $\{\mathbf{a}-\mathbf{b}\mid \mathbf{a}\in A,\mathbf{b}\in B\}$).

\begin{theorem}\label{the:inv_ss}
    Every subset $P$ of $\mathbb{N}^n$ which contains $\mathbf{0}=(0,\ldots,0)$ is $2$-invariant.
\end{theorem}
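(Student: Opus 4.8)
The plan is to exhibit, for a given $P\ni\mathbf{0}$, an explicit $2$-invariant game $G$ whose set of $\mathcal{P}$-positions is exactly $P$, and then to invoke Proposition~\ref{pro:noyau} to certify that $\mathcal{P}(G)=P$. The partition of the positions that I would use is the most natural one: $S_1=P$ and $S_2=\mathbb{N}^n\setminus P$. It then remains only to choose the two move sets $I_1$ and $I_2$.

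The key idea is to make every intended $\mathcal{P}$-position a terminal position and to allow every intended $\mathcal{N}$-position to jump directly to $\mathbf{0}$. Concretely, I set $I_1=\emptyset$, so that $G(\mathbf{p})=\emptyset$ for every $\mathbf{p}\in P$, and $I_2=\mathbb{N}^n\setminus P$ (in fact any set of nonzero moves containing $S_2$ would do equally well). Since $\mathbf{0}\in P$, the set $I_2$ contains no zero move, so every move strictly decreases the total number of tokens; the resulting game is therefore acyclic and impartial, exactly as needed for Proposition~\ref{pro:noyau} to apply.

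Then I would verify the two defining properties of the set of $\mathcal{P}$-positions against this candidate partition. Stability is immediate: a position $\mathbf{p}\in P$ has $G(\mathbf{p})=\emptyset$, so vacuously no move from it reaches $P$. For the absorbing property, take any $\mathbf{p}\in\mathbb{N}^n\setminus P$; since $\mathbf{p}\notin P$ we have $\mathbf{p}\in I_2$ and $\mathbf{p}\le\mathbf{p}$, so the move $\mathbf{m}=\mathbf{p}$ belongs to $G(\mathbf{p})$ and leads to $\mathbf{p}-\mathbf{p}=\mathbf{0}\in P$. Hence every position outside $P$ admits a move into $P$. By Proposition~\ref{pro:noyau} these two properties uniquely determine the $\mathcal{P}$-positions, whence $\mathcal{P}(G)=P$, and $G$ is $2$-invariant by construction.

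There is no serious obstacle in this argument; the only point deserving care is to locate where the hypothesis $\mathbf{0}\in P$ is genuinely used. It is precisely what secures the absorbing property for all of $S_2$ simultaneously: the single move $\mathbf{m}=\mathbf{p}$ that empties every pile carries any position to $\mathbf{0}$, which we have placed in $P$. Conversely, were $\mathbf{0}\notin P$, the position $\mathbf{0}$ would admit no move in any take-away game and would thus be forced to be a $\mathcal{P}$-position, so no game could realize $\mathcal{P}(G)=P$; this shows the hypothesis cannot be dropped. The degenerate case $P=\mathbb{N}^n$, where $S_2=\emptyset$, is covered by the same empty-move game and is in fact even $1$-invariant.
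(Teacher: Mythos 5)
Your proof is correct, but it takes a genuinely different (and more minimal) route than the paper. You use the crudest possible construction: partition into $S_1=P$, $S_2=\mathbb{N}^n\setminus P$, make every position of $P$ terminal via $I_1=\emptyset$, and let every other position jump to $\mathbf{0}$ via $I_2=\mathbb{N}^n\setminus P$. Both properties of Proposition~\ref{pro:noyau} then hold (stability vacuously, absorption via the move $\mathbf{m}=\mathbf{p}$), and your observation that $\mathbf{0}\in P$ is what keeps the zero move out of $I_2$ and makes $\mathbf{0}$ a valid target is exactly the right point of care. The paper instead takes $I_1=\mathbb{N}^n\setminus(P-P)$ --- the move set of the canonical invariant game that reappears in Lemma~\ref{lem:pos} --- assigns it to $S_1=P\cup S$, where $S$ is the set of positions outside $P$ from which some move of $I_1$ already reaches $P$, and reserves the jump-to-zero patch $I_2=S_2$ only for the positions where that canonical game fails the absorbing property. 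What the paper's heavier construction buys is structural information tied to the rest of the paper: its game restricts to the canonical $1$-invariant game precisely on the set where $1$-invariance ``works,'' so the patch set $S_2$ measures the obstruction to $1$-invariance that Theorem~\ref{the:main} later decides, and positions of $P$ are not artificially made terminal. What your construction buys is brevity and transparency: the theorem really is this cheap, and your closing remark that the hypothesis $\mathbf{0}\in P$ is also necessary (since $\mathbf{0}$ is forced to be a $\mathcal{P}$-position of any take-away game) is a worthwhile addition that the paper does not state.
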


\begin{proof}
  We define a partition of $\mathbb{N}^n$ into two subsets $S_1$ and $S_2$ as follows. First we make use of an auxiliary set $S$.
Let $\mathbf{x}$ be not in $P$. If there exists $\mathbf{y}\in P$ such that $\mathbf{y}\leq\mathbf{x}$ and $\mathbf{x}-\mathbf{y}\not\in P-P$, then $\mathbf{x}$ belongs to $S$. Otherwise, if, for all $\mathbf{y}$ in $P$ such that  $\mathbf{y}\leq\mathbf{x}$, we have $\mathbf{x}-\mathbf{y}\in P-P$ then $\mathbf{x}$ belongs to $S_2$. To get a partition of $\mathbb{N}^n$, we define $S_1$ as $S\cup P$.

We define $I_1=\mathbb{N}^n\setminus (P-P)$ and $I_2=S_2$. The game $G:\mathbb{N}^n\to 2^{\mathbb{N}^n}$ defined, for every $\mathbf{p}\in\mathbb{N}^n$, by $$\text{if }\mathbf{p}\in S_j\text{, then }G(\mathbf{p})=I_j\cap \{\mathbf{m} \in\mathbb{N}^n \mid \mathbf{m}\le \mathbf{p}\}, \text{ for }j=1,2,$$ is $2$-invariant.
Clearly, this game has $P$ as set of $\mathcal{P}$-positions, i.e., $\mathcal{P}(G)=P$. Indeed, with this definition, a player in a position $\mathbf{p}\in P$ can only play to a position not in $P$. Now assume that we have a position $\mathbf{p}\not\in P$. If $\mathbf{p}$ is in $S_1$ then, by definition of $S$, there is an option of $\mathbf{p}$ in $P$. If $\mathbf{p}$ is in $S_2$, then the move $\mathbf{p}$ leading directly to $\mathbf{0}\in P$ is allowed.
\end{proof}

\section{A first-order logic formalism}\label{sec:3}

We start with a minimal background on (Pisot) numeration systems, and then define what is a $U$-definable subset of $\mathbb{N}^n$. We conclude this section by the statement of B\"uchi's theorem which is at the center of our decision procedure.

A real number $\alpha>1$ is a {\em Pisot number} if it is the root of polynomial $P$ over $\mathbb Z$  whose dominant coefficient is $1$ ($\alpha$ is an algebraic integer) and whose all the others roots have modulus less than one.


\begin{definition}\cite{BH}
    A {\em Pisot numeration system} is an increasing sequence $(U_i)_{i\ge 0}$ of integers such that $U_0=1$ and $(U_i)_{i\ge 0}$ satisfies a linear recurrence relation whose characteristic polynomial is the minimal polynomial of a Pisot number $\alpha$.
\end{definition}

If $U=(U_i)_{i\ge 0}$ is a Pisot numeration system, every non-negative integer $m$ has a unique $U$-expansion denoted by $\rep_U(m)=c_\ell\cdots c_0$ computed by a greedy algorithm \cite{Fra:85}. It satisfies, for all $r\le\ell$,
$$\sum_{j=0}^r c_j\, U_j<U_{r+1}$$
and $c_\ell\neq 0$.
The Pisot condition implies that there exists $d>0$ such that $\lim_{i\to+\infty} U_i/\alpha^i=d$.
Hence $U_{i+1}/U_i$ converges to $\alpha$, and the digit-set for $U$-expansions is finite and equal to $\eint{0}{\sup \lceil U_{i+1}/U_i \rceil-1}$.

\begin{example}
    Let $k\ge 2$ be an integer. The sequence $(k^i)_{i\ge 0}$ is a Pisot numeration system. It is the usual base-$k$ number system. As a special case, we let $\rep_k(m)$ denote the usual base-$k$ expansion of $m$ with digit-set $\{0,\ldots,k-1\}$.
\end{example}

\begin{example}\label{exa:fibo}
    Let $(F_i)_{i\ge 0}$ be the Fibonacci sequence defined by $F_0=1$, $F_1=2$ and $F_{i+2}=F_{i+1}+F_i$ for all $i\ge 0$. The characteristic polynomial of $(F_i)_{i\ge 0}$ is $X^2-X-1$. It is the minimal polynomial of $\tau$, which is a Pisot number. The $F$-expansions are words over the digit-set $\{0,1\}$ which do not contain the factor $11$. The $F$-expansions of the first few positive integers are
$$1,10,100,101,1000,1001,1010,10000,10001,10010,10100,10101,100000,\ldots.$$
\end{example}

\begin{example}\label{exa:tribo}
    Let $(T_i)_{i\ge 0}$ be the sequence defined by $T_0=1$, $T_1=2$, $T_2=4$ and $T_{i+3}=T_{i+2}+T_{i+1}+T_i$ for all $i\ge 0$. The characteristic polynomial of $(T_i)_{i\ge 0}$ is $X^3-X^2-X-1$ which is the minimal polynomial of a Pisot number. This system is referred to as the {\em Tribonacci numeration system}. Again the digit-set is $\{0,1\}$, and $T$-expansions are words avoiding the factor $111$. The $T$-expansions of the first few positive integers are given in Table~\ref{rep:tri}
    \begin{table}[h]
        $$\begin{array}{r|ccccc|r|cccccc}
&13 & 7 &4 &2 &1 &    &24& 13 & 7 &4 &2 &1 \\
\hline
1 &  &   &  &  &1 &  14&     &1&0   &0  &0  &1 \\
2 &  &   &  & 1&0 &  15&   &1& 0  & 0 & 1&0    \\
3 &  &   &  & 1&1 &  16&  &1&   0&  0& 1&1    \\
4 &  &   & 1& 0&0 &  17&    &1&   0& 1& 0&0    \\
5 & &   & 1& 0&1 &   18&   &1&   0& 1& 0&1    \\
6 &  &   & 1& 1&0 &  19&   &1&   0& 1& 1&0    \\
7 &  & 1& 0 & 0&0 &  20&   &1& 1& 0 & 0&0     \\
8 &  & 1& 0 & 0&1 &  21&   &1& 1& 0 & 0&1    \\
9 &  & 1& 0 & 1&0 &  22&  &1& 1& 0 & 1&0   \\
10&   & 1& 0 & 1&1 &  23&   &1& 1& 0 & 1&1   \\
11&   & 1& 1 & 0&0 &  24&  1& 0& 0&  0& 0& 0  \\
12&   & 1& 1 & 0&1 &  25&  1& 0& 0&  0& 0& 1 \\
13& 1 & 0& 0 & 0&0 &  26&  1& 0& 0&  0& 1&0\\
\end{array}$$
        \caption{The $T$-representations of the first few integers.}
        \label{rep:tri}
    \end{table}
\end{example}

We recall basic definition about formal logic. See, for instance, \cite{BH} or \cite[Chapter~6]{flans}. Let $U$ be a Pisot numeration system. The alphabet $A_U$ of our first-order language contains countably many variables $x_1,x_2,x_3,\ldots$ (or $x,y,z,\ldots$) and extra symbols $+,=,V_U,\vee,\wedge,\to,\leftrightarrow,\neg,\exists,\forall$ as well as parentheses. We define terms and formulae inductively and we often make use of extra parentheses for the sake of clarity of the constructions.

\begin{definition}
First we define {\em terms} which are particular words over $A_U^*$. To construct them, we may apply the following rules finitely many times.
\begin{itemize}
  \item Any variable is a term.
  \item If $t$ is a term, then $V_U(t)$ is a term.
  \item If $t_1,t_2$ are terms, then $(t_1+t_2)$ is a term.
\end{itemize}
Second we define {\em formulae} inductively by applying the following rules finitely many times. Note that these formulae are also words over $A_U^*$.
\begin{itemize}
  \item If $t_1,t_2$ are terms, then $t_1=t_2$ is a formula.
  \item If $\varphi$ and $\psi$ are formulae, then
$\varphi\vee\psi,\ \varphi\wedge\psi,\ \varphi\to\psi,\ \varphi\leftrightarrow\psi,\ \neg\varphi$
are formulae.
\item If $\varphi$ is a formula and $x$ is a variable, then $\forall x\varphi$ and $\exists x\varphi$ are formulae.
\end{itemize}
The set of formulae over $A_U$ is the {\em first-order language} that we shall consider. We denote by $\mathfrak{n}_U=\langle \mathbb{N},+,V_U\rangle$ this structure.
\end{definition}

We turn to the interpretation (i.e., semantics) of these formulae. Variables are ranging over $\mathbb{N}$. Note that we can only quantify over variables (in contrast with second-order logic). The function $V_U$ maps $m>0$ to the smallest $U_i$ appearing in the $U$-expansion of $m$ with a non-zero coefficient. We set $V_U(0)=1$. In the special case of an integer base system, $V_k(m)$ is the largest power of $k$ dividing $m$. The other symbols carry their usual interpretation.

Let $\varphi$ be a {\em sentence}, i.e., a formula with no free variable (all the variables are under the scope of a quantifier). We write $\mathfrak{n}_U\models \varphi$ if the formula $\varphi$ is satisfied under the usual interpretation of the symbols. As an example, we write
$$\mathfrak{n}_U\models (\forall x)(\exists y)(x=y+y \vee x=y+y+1)$$
to express that every non-negative integer $x$ is either even or odd (adding one is permitted, see Remark~\ref{rem:var}). The set of sentences is called the {\em first-order theory} (of the corresponding language).

Now assume that $\varphi$ is a formula where $n$ free variables $x_1,\ldots,x_n$ occur. We write $\varphi(x_1,\ldots,x_n)$ to highlight the presence of these free variables. If we substitute all the occurrences of $x_1,\ldots,x_n$ with constants $d_1,\ldots,d_n\in\mathbb{N}$ respectively, then the resulting formula is a sentence and either $\mathfrak{n}_U\models \varphi(d_1,\ldots,d_n)$ or not. Given a formula $\varphi(x_1,\ldots,x_n)$, we can therefore consider the $n$-tuples $(d_1,\ldots,d_n)\in\mathbb{N}^n$ such that $\mathfrak{n}_U\models \varphi(d_1,\ldots,d_n)$. This leads to the following definition.

\begin{definition}
Let $n\ge 1$. Let  $U=(U_i)_{i\ge 0}$ be a Pisot numeration system.
    A set $X\subseteq\mathbb{N}^n$ is {\em $U$-definable} if there exists a formula $\varphi(x_1,\ldots,x_n)$ in $\langle \mathbb{N},+,V_U\rangle$ such that
$$X=\{(d_1,\ldots,d_n)\in\mathbb{N}^n\mid \mathfrak{n}_U\models \varphi(d_1,\ldots,d_n)\}.$$
If $U$ is the usual base-$k$ number system, we speak of {\em $k$-definable} sets.
\end{definition}

\begin{remark}\label{rem:var}
We can define integer constants, inequality relation, multiplication by a constant, and Euclidean division by a constant within $\langle \mathbb{N},+,V_U\rangle$. The order relation $x\le y$ is defined by $(\exists z)(y=x+z)$. The constant $0$ is defined by $x=0\equiv (\forall y)(x\le y)$. The successor function $S$ mapping $n$ to $n+1$ is defined by $S(x)=y\equiv (x\le y)\wedge (\forall z)((\neg(x=z)\wedge(x\le z)) \to (y\le z))$. To define $r\cdot x$, where $r$ is a constant, one has to write $x+\cdots +x$, where the sum has $r$ terms.
\end{remark}

\begin{example}\label{exa:phi}
    Consider the formula $$\psi(x)\equiv(\exists y)(x=y+y).$$ The formula $\psi$ defines the set of even integers (whatever is the Pisot numeration system $U$). Let $k\ge 2$. The formula $$\nu(x)\equiv V_k(x)=x$$ defines the set of powers of $k$. We can say that this set is $k$-definable. Let $(T_i)_{i\ge 0}$ be the Tribonacci sequence from Example~\ref{exa:tribo}. The formula $$\phi(n)\equiv (\exists x)(\exists y)(x\neq y\wedge V_T(x)=x\wedge V_T(y)=y\wedge n=x+y)$$ defines the set of number whose $T$-expansion contains exactly two non-zero terms. The first few integers in the set defined by $\phi$ are $3,5,6,8,9,11,14,15$ because their $T$-expansions contain exactly two symbols $1$.
\end{example}

Given any sentence $\varphi$ in $\langle \mathbb{N},+,V_U\rangle$, there exists an algorithm to check whether $\mathfrak{n}_U\models \varphi$ holds. This result is known as B{\"u}chi's theorem. See \cite{Buc,BHMV}.

\begin{theorem}\label{the:buchi}
    The first-order theory $\langle \mathbb{N},+,V_U\rangle$ is effectively decidable.
\end{theorem}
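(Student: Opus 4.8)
The plan is to establish the standard equivalence between first-order definability in $\mathfrak{n}_U=\langle\mathbb{N},+,V_U\rangle$ and recognizability by finite automata, and then to exploit the decidability of the emptiness problem for finite automata. Concretely, I would encode an integer $m$ by the word $\rep_U(m)$ over the finite digit-set $A_U$, and an $n$-tuple $(m_1,\ldots,m_n)$ by reading $\rep_U(m_1),\ldots,\rep_U(m_n)$ synchronously after left-padding them with zeros to a common length, thus producing a single word over $(A_U)^n$. Call a set $X\subseteq\mathbb{N}^n$ \emph{$U$-recognizable} if the language of encodings of its elements is accepted by some finite automaton. The goal is then to prove, by structural induction on formulas, that every formula $\varphi(x_1,\ldots,x_n)$ in $\langle\mathbb{N},+,V_U\rangle$ defines a $U$-recognizable subset of $\mathbb{N}^n$.

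For the base of the induction I would show that each atomic relation is $U$-recognizable. First, the set of admissible $U$-expansions is a regular language: for a Pisot system the greedy condition $\sum_{j=0}^r c_j\,U_j<U_{r+1}$ can be tested by a finite automaton, because the Pisot hypothesis makes this comparison depend only on a bounded window of digits. Equality is recognized by the automaton checking digit-wise agreement on the padded encodings. The graph $\{(m,V_U(m))\}$ of $V_U$ is recognizable since $V_U(m)$ is determined by the position of the least significant non-zero digit of $\rep_U(m)$. The crucial atomic relation is addition: I would invoke the fact that $\{(x,y,z)\mid x+y=z\}$ is $U$-recognizable in every Pisot numeration system, which is the result of Bruy\`ere and Hansel \cite{BH} and is the heart of the matter.

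For the inductive step I would rely on the closure properties of regular languages. If $\varphi$ and $\psi$ define recognizable sets, then $\varphi\wedge\psi$ and $\varphi\vee\psi$ are handled by the product construction (after bringing the two automata to a common padded alphabet), $\neg\varphi$ by determinization followed by complementation, and $(\exists x_i)\varphi$ by erasing the $i$-th track of the alphabet, which yields a nondeterministic automaton that is then determinized by the subset construction; one only needs to check that the zero-padding convention is preserved, which is routine. Since $\forall x$ is $\neg\exists x\neg$, all quantifiers are covered. Hence, by induction, every sentence $\varphi$ yields a finite automaton over the alphabet $(A_U)^0$ whose accepted language is nonempty precisely when $\mathfrak{n}_U\models\varphi$. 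Deciding $\mathfrak{n}_U\models\varphi$ therefore reduces to an emptiness test, which is effective; as each construction above is itself effective, the whole procedure is an algorithm, proving decidability.

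The main obstacle is the recognizability of addition, and it is exactly here that the Pisot hypothesis is indispensable. The difficulty is that, when adding two admissible $U$-expansions digit by digit, carries may a priori propagate arbitrarily far and the digit-wise sum need not be admissible, so one must \emph{normalize} the result back to its greedy form. The Pisot condition guarantees that the recurrence remainders $U_i-d\,\alpha^i$ stay bounded, since the subdominant roots of the minimal polynomial lie strictly inside the unit circle; this bounds the range over which carries can propagate and allows a finite transducer to perform the normalization. Establishing this uniform bound and extracting the automaton from it is the technical core of the argument, whereas the closure properties and the emptiness test are standard facts from automata theory.
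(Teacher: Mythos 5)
Your proposal is correct, and it reconstructs exactly the standard argument behind this theorem: the paper itself gives no proof, quoting the result as known (B\"uchi's theorem in the form extended to Pisot systems, citing \cite{Buc,BHMV}, with the key recognizability-of-addition result due to Frougny and Bruy\`ere--Hansel \cite{BH}). Your induction on formulas, the closure properties with projection/determinization for quantifiers, the reduction of sentences to an emptiness test, and your identification of normalization of addition under the Pisot hypothesis as the technical core all match the proof in those references.
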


\begin{remark}\label{imp}
    This result has been recently used to get positive results in combinatorics on words. Implementations to deal with the Fibonacci and Tribonacci numerations systems have been developped \cite{aut1,aut2}. With these implementations (mostly relying on automata recognizing addition in these systems) many properties of the Fibonacci, Tribonacci and related infinite words are proved automatically on a laptop with computing time ranging from a few seconds to two hours. The source code developped by the authors of \cite{aut1,aut2} has not yet been publicly released. Also see \cite{goc2} for an example about integer base systems.
\end{remark}

\section{Decision procedure}\label{sec:4}

Let us make a preliminary observation. A move can be adjoined to an impartial acyclic  game without
changing the set of $\mathcal{P}$-positions if and only if it does not alter the stability property (defined in
Proposition~\ref{pro:noyau}). Indeed, adding a move leading from
one $\mathcal{P}$-position to another $\mathcal{P}$-position would necessarily
change the stability property of the $\mathcal{P}$-positions (by
Proposition~\ref{pro:noyau}). On the other hand, adding a move
which does not correspond to a move between any two $\mathcal{P}$-positions means that both properties of Proposition~\ref{pro:noyau} remain true. Therefore, a move $\mathbf{m}$ can be
added if and only if it prevents a move from a $\mathcal{P}$-position to another $\mathcal{P}$-position.

\begin{lemma}\label{lem:pos}
    A subset $P$ of $\mathbb{N}^n$ is $1$-invariant if and only if the $1$-invariant game $H$ defined by $$H(\mathbf{p})=(\mathbb{N}^n\setminus (P-P))\cap \{\mathbf{m}\in\mathbb{N}^n\mid \mathbf{m}\le \mathbf{p}\}, \text{ for every }\mathbf{p},$$
has $P$ as set of $\mathcal{P}$-positions, i.e., $P=\mathcal{P}(H)$.
\end{lemma}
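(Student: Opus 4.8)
The plan is to prove the two implications separately; the reverse one is immediate and the forward one carries all the content. For the reverse direction, I would note that $H$ is by construction a $1$-invariant game, since a single move set $I_H := \mathbb{N}^n \setminus (P-P)$ is applied to every position. Hence if $\mathcal{P}(H) = P$, then $P$ is the set of $\mathcal{P}$-positions of a $1$-invariant game, and is therefore $1$-invariant by definition. Nothing further is required.

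For the forward direction, suppose $P$ is $1$-invariant, witnessed by an invariant game $G$ with move set $I$ and $\mathcal{P}(G) = P$. My first step would be to establish the inclusion $I \subseteq I_H$. Suppose toward a contradiction that some $\mathbf{m} \in I$ lies in $P - P$. Since $\mathbf{m} \in \mathbb{N}^n$, writing $\mathbf{m} = \mathbf{p} - \mathbf{q}$ with $\mathbf{p}, \mathbf{q} \in P$ forces $\mathbf{q} \le \mathbf{p}$; thus $\mathbf{m}$ is an admissible move from $\mathbf{p}$ in $G$ whose option is $\mathbf{p} - \mathbf{m} = \mathbf{q} \in P$. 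This move joins two $\mathcal{P}$-positions, contradicting the stability property of Proposition~\ref{pro:noyau}. Hence $I \cap (P - P) = \emptyset$, i.e., $I \subseteq I_H$.

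Next I would show that enlarging the move set from $I$ to $I_H$ leaves the $\mathcal{P}$-positions equal to $P$. By Proposition~\ref{pro:noyau} it suffices to check that $P$ satisfies the stability and absorbing properties for $H$. Stability is immediate from the definition of $I_H$: for $\mathbf{p} \in P$ and any admissible $\mathbf{m} \in H(\mathbf{p})$, one cannot have $\mathbf{p} - \mathbf{m} \in P$, for otherwise $\mathbf{m} = \mathbf{p} - (\mathbf{p} - \mathbf{m}) \in P - P$, contradicting $\mathbf{m} \in I_H$. For the absorbing property, take $\mathbf{p} \notin P$; since $\mathcal{P}(G) = P$, the game $G$ already offers an option in $P$, that is, some $\mathbf{m} \in I$ with $\mathbf{m} \le \mathbf{p}$ and $\mathbf{p} - \mathbf{m} \in P$. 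Because $I \subseteq I_H$, this same move is admissible in $H$, so $\mathbf{p}$ also has an option in $P$ under $H$. The uniqueness clause of Proposition~\ref{pro:noyau} then yields $\mathcal{P}(H) = P$.

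I do not expect a genuine obstacle here; the one point worth highlighting is that stability for $H$ holds for free from the definition of $I_H$, so the only property at stake is absorption, which is inherited from $G$ precisely because $H$'s move set contains that of $G$. This is exactly the preliminary remark that a move may be adjoined without altering the $\mathcal{P}$-positions if and only if it does not connect two $\mathcal{P}$-positions, i.e., if and only if it lies in $\mathbb{N}^n \setminus (P - P)$; the game $H$ merely adds all such moves simultaneously.
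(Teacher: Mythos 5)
Your proof is correct and follows essentially the same route as the paper: both establish $I\subseteq\mathbb{N}^n\setminus(P-P)$ by the stability argument of Proposition~\ref{pro:noyau}, observe that $H$ therefore extends $G$, and conclude $\mathcal{P}(H)=\mathcal{P}(G)=P$ from the two defining properties of $\mathcal{P}$-positions. Your write-up is in fact slightly more explicit than the paper's (you spell out the trivial reverse direction and verify stability and absorption for $H$ separately), but the underlying argument is identical.
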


\begin{proof}
    Assume that there exists a $1$-invariant game $G$ such that $P=\mathcal{P}(G)$, i.e., there exists a subset $I$ such that $G(\mathbf{p})=I\cap \{\mathbf{m}\in\mathbb{N}^n\mid \mathbf{m}\le \mathbf{p}\}$ for every $\mathbf{p}$. For all positions $\mathbf{p}$, we have
$$I\subseteq(\mathbb{N}^n\setminus (P-P))\text{, and thus } G(\mathbf{p})\subseteq H(\mathbf{p}),$$
because if a move $\mathbf{m}$ belongs to $I$, then $\mathbf{m}$ cannot be written as $\mathbf{x}-\mathbf{y}$ for two distinct elements $\mathbf{x},\mathbf{y}\in P$. Indeed if that were the case, $\mathbf{m}=\mathbf{x}-\mathbf{y}$ would belong to $G(\mathbf{x})$, and we would be able to play from $\mathbf{x}\in P$ to another element $\mathbf{y}\in P$. This contradicts the fact that $P$ is the set of $\mathcal{P}$-positions of $G$.

If we compare the two games, $H$ is an extension of $G$: for every position, we could have more options in $H$ than in $G$. Nevertheless, for a position in $P$ if there are more options available in $H$, the new options do not belong to $P$.
Using Proposition~\ref{pro:noyau}, we deduce that $P=\mathcal{P}(G)=\mathcal{P}(H)$.
\end{proof}

\begin{theorem}\label{the:main}
Let $n\ge 1$. Let $U$ be a Pisot numeration system.
    Let $P$ be a $U$-definable subset of $\mathbb{N}^n$ containing $\mathbf{0}$. It is decidable whether $P$ is an $1$-invariant set.
\end{theorem}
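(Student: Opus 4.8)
The plan is to reduce the question to deciding the truth of a single sentence of $\langle\mathbb{N},+,V_U\rangle$ and then to invoke B\"uchi's theorem (Theorem~\ref{the:buchi}). Since $P$ is $U$-definable, I fix a formula $\varphi(x_1,\ldots,x_n)$ such that $\mathbf{p}\in P$ iff $\mathfrak{n}_U\models\varphi(p_1,\ldots,p_n)$; below I abbreviate $\varphi(p_1,\ldots,p_n)$ by $\varphi(\mathbf{p})$. By Lemma~\ref{lem:pos}, $P$ is $1$-invariant if and only if the canonical game $H$, with move set $(\mathbb{N}^n\setminus(P-P))\cap\{\mathbf{m}\le\mathbf{p}\}$, satisfies $\mathcal{P}(H)=P$. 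So the whole task is to express ``$\mathcal{P}(H)=P$'' in $\langle\mathbb{N},+,V_U\rangle$.

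By Proposition~\ref{pro:noyau}, $\mathcal{P}(H)=P$ holds exactly when $P$ enjoys both the stability and the absorbing property for $H$. The stability property holds automatically: if $\mathbf{p}\in P$ and a move $\mathbf{m}\in H(\mathbf{p})$ led to $\mathbf{q}=\mathbf{p}-\mathbf{m}\in P$, then $\mathbf{m}=\mathbf{p}-\mathbf{q}$ would lie in $P-P$, contradicting $\mathbf{m}\in\mathbb{N}^n\setminus(P-P)$ (this is the argument already used in the proof of Lemma~\ref{lem:pos}). Hence it suffices to encode the absorbing property: from every $\mathbf{p}\notin P$ there is an admissible move of $H$ reaching $P$. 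Reparametrising a move by its target $\mathbf{q}=\mathbf{p}-\mathbf{m}$, this reads: for every $\mathbf{p}\notin P$ there exists $\mathbf{q}\in P$ with $\mathbf{q}\le\mathbf{p}$ and $\mathbf{p}-\mathbf{q}\notin P-P$.

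I expect the only genuine work to lie in translating this faithfully, the delicate point being that arithmetic is over $\mathbb{N}$ with no subtraction, whereas $P-P$ lives a priori in $\mathbb{Z}^n$. I sidestep this by rewriting every difference as an additive identity. Since a move lies in $\mathbb{N}^n$, only $(P-P)\cap\mathbb{N}^n$ matters, and for $\mathbf{q}\le\mathbf{p}$ the predicate ``$\mathbf{p}-\mathbf{q}\notin P-P$'' becomes, using $\mathbf{p}-\mathbf{q}=\mathbf{x}-\mathbf{y}\iff\mathbf{p}+\mathbf{y}=\mathbf{q}+\mathbf{x}$,
$$\psi(\mathbf{p},\mathbf{q})\equiv\neg\,(\exists x_1)\cdots(\exists x_n)(\exists y_1)\cdots(\exists y_n)\Bigl(\varphi(\mathbf{x})\wedge\varphi(\mathbf{y})\wedge\textstyle\bigwedge_{i=1}^n(p_i+y_i=q_i+x_i)\Bigr).$$
Writing $\mathbf{q}\le\mathbf{p}$ for $\bigwedge_i(\exists z)(p_i=q_i+z)$ as in Remark~\ref{rem:var}, the absorbing property becomes the sentence
$$\Phi\equiv(\forall p_1)\cdots(\forall p_n)\Bigl[\neg\varphi(\mathbf{p})\to(\exists q_1)\cdots(\exists q_n)\bigl(\varphi(\mathbf{q})\wedge(\mathbf{q}\le\mathbf{p})\wedge\psi(\mathbf{p},\mathbf{q})\bigr)\Bigr].$$
Note that the move $\mathbf{p}-\mathbf{q}$ is automatically non-zero, since $\mathbf{q}\in P$ while $\mathbf{p}\notin P$ forces $\mathbf{q}\neq\mathbf{p}$, so $\Phi$ indeed captures the game semantics.

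As $\varphi$ is a formula of $\langle\mathbb{N},+,V_U\rangle$ and the remaining ingredients ($+$, $=$, the order relation, Booleans, and quantifiers over $\mathbb{N}$) are all available in this language by Remark~\ref{rem:var}, the formula $\Phi$ is a genuine sentence of $\langle\mathbb{N},+,V_U\rangle$. By the discussion above, $\mathfrak{n}_U\models\Phi$ if and only if $P$ satisfies the absorbing property for $H$, equivalently (stability being automatic) if and only if $\mathcal{P}(H)=P$, equivalently if and only if $P$ is $1$-invariant. B\"uchi's theorem (Theorem~\ref{the:buchi}) then yields an effective procedure deciding whether $\mathfrak{n}_U\models\Phi$, which settles the decidability. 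The main obstacle is thus purely in the encoding step, in particular eliminating the subtraction in $P-P$ and checking that quantifying over $\mathbb{N}$ alone reproduces the intended moves; once $\Phi$ is built, the conclusion is immediate from Theorem~\ref{the:buchi}.
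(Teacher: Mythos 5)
Your proof is correct and takes essentially the same route as the paper's: reduce to the canonical game of Lemma~\ref{lem:pos}, observe that stability is automatic so only the absorbing property needs to be encoded, write that property as a sentence of $\langle\mathbb{N},+,V_U\rangle$ (eliminating subtraction through additive identities), and conclude with Theorem~\ref{the:buchi}. The only cosmetic differences are that you keep general $n$ where the paper reduces to $n=2$, and you spell out via Proposition~\ref{pro:noyau} what the paper states more briefly.
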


\begin{proof}
    Without loss of generality, we may assume that $n=2$. Let $\pi(x,y)$ be a first-order formula defining $P$.
The set $(P-P)\cap\mathbb{N}^2$ is $U$-definable by the following formula
$$\varphi(x,y)\equiv(\exists i_1)(\exists i_2)(\exists j_1)(\exists j_2)(\pi(i_1,i_2)\wedge \pi(j_1,j_2)\wedge j_1\ge i_1\wedge j_2\ge i_2\wedge x=j_1-i_1\wedge y=j_2-i_2).$$
From the above lemma, the fact that $P$ is $1$-invariant can be expressed by the validity of the sentence
$$\nu\equiv (\forall p_1)(\forall p_2) (\neg\pi(p_1,p_2)\to (\exists x)(\exists y)(\pi(x,y)\wedge x\le p_1\wedge y\le p_2\wedge \neg\varphi(p_1-x,p_2-y))).$$
Indeed, for every position $(p_1,p_2)$ which is not in $P$, we are looking for $(x,y)\in P$ and a move not in $P-P$ from  $(p_1,p_2)$ to $(x,y)$. If this holds for every $(p_1,p_2)$ not in $P$, then the invariant game defined in Lemma~\ref{lem:pos} has $P$ as set of $\mathcal{P}$-position.
We finish the proof by an application of Theorem~\ref{the:buchi}. We can decide if the sentence $\nu$ holds.
\end{proof}

\section{Recognizable sets and synchronized sequences}\label{sec:recsyn}

A key ingredient for the proof of Theorem~\ref{the:buchi} comes from automata theory. B{\"u}chi's proof is constructive: a finite automaton is associated with every formula and conversely \cite{Buc}. We will thus reformulate Theorem~\ref{the:main} in terms of regular languages, i.e., sets of $n$-tuples of words recognized by finite automata. Again for details, see \cite{BH,BHMV}.

\begin{definition}
Let $n\ge 1$. Let  $U=(U_i)_{i\ge 0}$ be a Pisot numeration system. We first define the $U$-expansion of an $n$-tuple of integers as
$$\rep_U(x_1,\ldots,x_n)=(0^{\ell-|\rep_U(x_1)|}\rep_U(x_1),\ldots,0^{\ell-|\rep_U(x_n)|}\rep_U(x_n))$$
where $\ell=\max_{j\in\{1,\ldots,n\}}|\rep_U(x_j)|$.\end{definition}

\begin{example}
    Consider the sequence $(F_i)_{i\ge 0}$ from Example~\ref{exa:fibo}. For the pair $(6,10)$, we have $\rep_F(6)=1001$, $\rep_F(10)=10010$, and thus $\rep_F(6,10)=(01001,10010)$.
\end{example}

 The idea is to have $n$ components of the same length; so the shortest expansions are padded to the left with zeroes. Hence an automaton can read simultaneously the $i$th digit of every component.

\begin{definition}
    A set $X\subseteq\mathbb{N}^n$ is {\em $U$-recognizable} if there exists a deterministic finite automaton (DFA for short) reading $n$-tuples of digits and recognizing the set
$$\rep_U(X)=\{\rep_U(x_1,\ldots,x_n)\mid (x_1,\ldots,x_n)\in X\}.$$
If $U$ is the usual base-$k$ number system, we speak of {\em $k$-recognizable} sets.
\end{definition}

We use standard conventions to represent DFA. The initial state has an incoming arrow. The final states have an outgoing arrow. Words are read from left to right, i.e., most significant digit first.
If a transition is not depicted, then it leads to a dead state. Note that we allow leading zeroes (which does not affect recognizability by finite automaton).

\begin{example} The DFA depicted in Figure~\ref{fig:V2a} recognizes the base-$2$ expansions of pairs $(x,y)$ such that $y$ is the largest power of $2$ dividing $x$.  For instance, $(1,1)$, $(101010,000010)$ or $(11100,00100)$ belong to the recognized language. Otherwise stated, the set $\{(n,V_2(n))\mid n>0\}$ is $2$-recognizable.
    \begin{figure}[htbp]
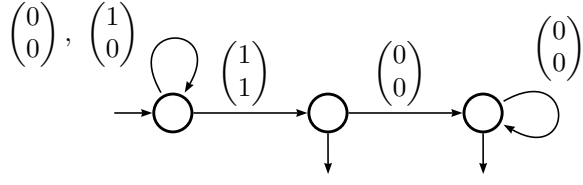

        \centering
\VCDraw{%
        \begin{VCPicture}{(-2,-2)(6,2.5)}
 \StateVar[]{(-1.7,0)}{3}
 \StateVar[]{(1.7,0)}{4}
 \StateVar[]{(5.1,0)}{5}
\Initial[w]{3}
\Final[s]{4}
\Final[s]{5}
\LoopN{3}{\begin{pmatrix}
    0\\0\\
\end{pmatrix},\
\begin{pmatrix}
    1\\0\\
\end{pmatrix}}
\LoopE{5}{\begin{pmatrix}
    0\\0\\
\end{pmatrix}}
\EdgeL{3}{4}{\begin{pmatrix}
    1\\1\\
\end{pmatrix}}
\EdgeL{4}{5}{\begin{pmatrix}
    0\\0\\
\end{pmatrix}}
\end{VCPicture}
}
\caption{A DFA recognizing a language over $\{0,1\}^2$.}
        \label{fig:V2a}
    \end{figure}
\end{example}

\begin{theorem}\cite{BH}\label{the:equiv} Let $n\ge 1$. Let  $U=(U_i)_{i\ge 0}$ be a Pisot numeration system.
    A set $X\subseteq\mathbb{N}^n$ is $U$-recognizable if and only it is $U$-definable.
\end{theorem}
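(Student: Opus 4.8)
The plan is to establish the two implications separately, via the standard dictionary between first-order formulae over $\langle\mathbb{N},+,V_U\rangle$ and finite automata reading $U$-expansions of tuples.

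For the direction ``$U$-definable $\Rightarrow$ $U$-recognizable'', I would argue by structural induction on a formula $\varphi(x_1,\ldots,x_n)$ defining $X$. The base cases are the atomic formulae. Equality $x=y$ is recognized by a trivial two-track DFA comparing digits columnwise. The graph $\{(x,y,z)\mid x+y=z\}$ must be shown $U$-recognizable; this is the nontrivial point, and is exactly where the Pisot hypothesis enters, via Frougny's theorem that addition is computable by a finite automaton in any Pisot numeration system. The graph $\{(m,V_U(m))\}$ is $U$-recognizable because $V_U(m)$ is determined by the position of the least significant nonzero digit of $\rep_U(m)$, which an automaton can detect while scanning (the base-$2$ instance is the DFA of Figure~\ref{fig:V2a}). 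For the inductive step, I would invoke closure of the class of $U$-recognizable sets under the Boolean connectives and under existential quantification: conjunction and disjunction are handled by the product construction, negation by determinization followed by complementation of the set of final states, and an existential quantifier $\exists x_i$ by projecting out the $i$th track (a nondeterministic erasure) and a subsequent determinization. Some care is needed so that the projected relation is closed under the leading zeroes introduced by padding to a common length, but since we explicitly allow leading zeroes this is routine.

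For the converse ``$U$-recognizable $\Rightarrow$ $U$-definable'', I would start from a DFA $\mathcal{A}$ with state set $\{q_0,\ldots,q_s\}$ recognizing $\rep_U(X)$ and write a formula asserting the existence of an accepting run. The first task is to make digit information accessible inside the logic: using $+$ and $V_U$ one can define the sequence of base values $(U_i)$, the relation expressing that a given integer is some $U_i$, and from these a digit-extraction predicate giving the coefficient of $U_i$ in $\rep_U(m)$. The run of $\mathcal{A}$ on the inputs $x_1,\ldots,x_n$ is then encoded by an existentially quantified auxiliary integer whose expansion records, at each position $i$, the state reached after reading the first $i$ digit-columns; the formula asserts that position $0$ carries the initial state, that the recorded states respect the transition function $\delta$ at every consecutive pair of positions (a condition expressible by a finite disjunction over the digit-columns and the transitions of $\mathcal{A}$), and that the state recorded at the most significant position is accepting.

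The main obstacle is the recognizability of addition for a general Pisot system: this is the deep ingredient, and it is precisely the property that fails for non-Pisot linear recurrences, so the hypothesis on $U$ cannot be dropped. The symmetric subtlety in the converse direction is the definability of digit extraction and of the run encoding from $+$ and $V_U$ alone; granting these, the Boolean combination of the local constraints yields the required formula, and the two implications together give the stated equivalence.
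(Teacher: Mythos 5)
The paper gives no proof of this theorem: it is imported verbatim from Bruy\`ere--Hansel \cite{BH}, so your attempt can only be measured against the standard argument of that reference, which your outline does follow: structural induction with closure properties for ``definable $\Rightarrow$ recognizable'' (with Frougny's theorem on addition in Pisot systems as the key ingredient, correctly identified), and an encoding of accepting runs for the converse. Your forward direction is sound as a sketch, granted Frougny's result and one ingredient you leave implicit: the regularity of $\rep_U(\mathbb{N})$ itself, which is needed so that negation is complementation \emph{relative to the set of valid expansion tuples} rather than relative to all strings.

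The converse direction, however, contains a step that fails as stated. You encode the run by a single existentially quantified integer ``whose expansion records, at each position, the state reached.'' For a general Pisot system this is impossible, for two reasons. First, the digit-set of $U$ is fixed and may be smaller than the state set: in the Fibonacci system of Example~\ref{exa:fibo} the digits are $\{0,1\}$, so the states of a three-state DFA cannot be named by digits. Second, and more fundamentally, integers are in bijection only with the \emph{greedy} expansions, which form a proper subset of all digit strings (in the Fibonacci case, the strings avoiding the factor $11$); hence one cannot prescribe the digits of an auxiliary integer freely position by position --- there is, for instance, no integer whose $F$-expansion has two consecutive prescribed nonzero digits. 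This is precisely where the Pisot case is harder than the integer-base case, where your encoding would be routine; it is the substance of the converse in \cite{BH}. The standard repair is to use one auxiliary variable per state, recording the set of positions at which the run visits that state via the \emph{support} of its expansion, after first splitting each such set of positions into finitely many sufficiently sparse subsets (positions pairwise separated by a large constant $d$): the greedy condition $\sum_{j\le r}c_jU_j<U_{r+1}$ shows that sparse $0$--$1$ strings are always admissible, and the predicate ``$U_i$ occurs with nonzero coefficient in the expansion of $z$'' is definable from $+$ and $V_U$. With that encoding the local transition and acceptance constraints become a finite disjunction as you describe, and the proof closes.
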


\begin{example}
    The DFA depicted in Figure~\ref{fig1} recognizes the $T$-expansions containing exactly two symbols $1$. It means that the corresponding set of integers is $T$-recognizable. Recall that this set also is $T$-definable using the formula $\psi$ of Example~\ref{exa:phi}.
\begin{figure}[htbp]
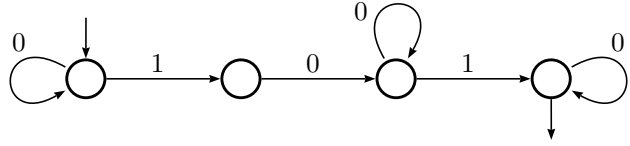

        \centering
\VCDraw{%
        \begin{VCPicture}{(-2,-1)(6,2)}
 \StateVar[]{(-1.7,0)}{1}
 \StateVar[]{(1.7,0)}{2}
 \StateVar[]{(5.1,0)}{3}
 \StateVar[]{(8.5,0)}{4}
\Initial[n]{1}
\Final[s]{4}
\LoopW{1}{0}
\LoopE{4}{0}
\LoopN{3}{0}
\EdgeL{1}{2}{1}
\EdgeL{2}{3}{0}
\EdgeL{3}{4}{1}
\end{VCPicture}
}
\caption{A DFA recognizing a $T$-definable set.}
        \label{fig1}
    \end{figure}
\end{example}

Thus our main theorem (Theorem~\ref{the:main}) can also be restated as follows. We state this result for two sequences, but there is no problem to extend this result to $n$-tuples.

\begin{theorem}\label{thm:main}
    Let $U$ be a Pisot numeration system. If a set $P$ is $U$-recognizable, then it is decidable whether $P$ is a $1$-invariant set.
\end{theorem}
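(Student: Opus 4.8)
The plan is to deduce this statement directly from Theorem~\ref{the:main}, exploiting the fact that the equivalence between $U$-recognizable and $U$-definable sets given by Theorem~\ref{the:equiv} is effective. Concretely, starting from a DFA recognizing $P$, B\"uchi's correspondence (on which Theorem~\ref{the:equiv} rests) produces, in an algorithmic way, a first-order formula $\pi(x_1,\ldots,x_n)$ in $\langle\mathbb{N},+,V_U\rangle$ defining $P$. Once $\pi$ is at hand, one is exactly in the setting of Theorem~\ref{the:main}: the decision procedure built there --- form the formula $\varphi$ for $(P-P)\cap\mathbb{N}^n$, then the sentence $\nu$, and test its validity through Theorem~\ref{the:buchi} --- applies without change.

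The only point requiring attention is the hypothesis $\mathbf{0}\in P$ occurring in Theorem~\ref{the:main} but absent from the present formulation. This is harmless. Under normal convention no move can be made from the empty position, so $\mathbf{0}$ is a $\mathcal{P}$-position of every game; hence any set of the form $\mathcal{P}(G)$ contains $\mathbf{0}$, and a set $P$ with $\mathbf{0}\notin P$ is not $1$-invariant (indeed not $t$-invariant for any $t$). I would therefore first test whether $\mathbf{0}\in P$, which is trivially decidable by running the given DFA on $\rep_U(\mathbf{0})$: if $\mathbf{0}\notin P$ the answer is negative, and otherwise Theorem~\ref{the:main} settles the question.

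I expect no genuine mathematical obstacle, since the real work has already been carried out in Lemma~\ref{lem:pos}, Theorem~\ref{the:main} and, ultimately, Theorem~\ref{the:buchi}. If one prefers to remain entirely on the automaton side --- which is natural in view of the existing implementations --- one can bypass the passage through a formula and build the object underlying $\nu$ directly from automata: the difference $P-P$, the complement $\mathbb{N}^n\setminus(P-P)$, the bounded-move condition $\mathbf{m}\le\mathbf{p}$, and the quantifiers are all realized by the standard effective operations on DFA (product, determinization and complementation, projection), after which deciding $\nu$ reduces to an emptiness test. The only cost worth flagging is practical rather than conceptual, namely the state blow-up produced by the determinizations needed for the complementations and for the projections arising from the existential quantifiers.
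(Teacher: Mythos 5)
Your proposal is correct and follows exactly the route the paper intends: Theorem~\ref{thm:main} is obtained from Theorem~\ref{the:main} by the effective equivalence of $U$-recognizability and $U$-definability (Theorem~\ref{the:equiv}), which is all the paper itself says. Your explicit treatment of the dropped hypothesis $\mathbf{0}\in P$ --- testing membership of $\mathbf{0}$ with the DFA and noting that $\mathbf{0}$ is a $\mathcal{P}$-position of every game, so $\mathbf{0}\notin P$ immediately rules out $1$-invariance --- is a small but genuine point of care that the paper leaves implicit.
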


The notion of $k$-synchronized sequences is classical \cite{Carpi}. The idea is that the graph of the function, i.e., the set of pairs $(n,f(n))_{n\ge 0}$, is $k$-recognizable. As we will easily see, synchronized sequences are sufficient to apply our theorem.
\begin{definition}
Let  $U=(U_i)_{i\ge 0}$ be a Pisot numeration system. A sequence $(x_i)_{i\ge 0}$ of non-negative integers is {\em $U$-synchronized} if the set $\{(i,x_i)\mid i\ge 0\}\subset\mathbb{N}^2$ is $U$-recognizable.
\end{definition}

\begin{example} The DFA depicted in Figure~\ref{fig2} recognizes exactly the set $\{(i,2i)\mid i\ge 0\}$. Otherwise stated, the sequence $(2i)_{i\ge 0}$ is $2$-synchronized.
    \begin{figure}[htbp]
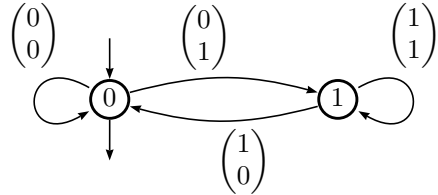

        \centering
\VCDraw{%
        \begin{VCPicture}{(-2,-2)(3,2)}
 \StateVar[0]{(-2,0)}{1}
 \StateVar[1]{(3,0)}{2}
\Initial[n]{1}
\Final[s]{1}
\LoopW{1}{
  \begin{pmatrix}
      0\\0\\
  \end{pmatrix}
}
\LoopE{2}{
  \begin{pmatrix}
      1\\ 1\\
  \end{pmatrix}
}
\ArcL{1}{2}{
  \begin{pmatrix}
      0\\ 1\\
  \end{pmatrix}
}
\ArcL{2}{1}{
  \begin{pmatrix}
      1\\ 0\\
  \end{pmatrix}
}
\end{VCPicture}
}
\caption{A DFA recognizing a $2$-synchronized relation.}
        \label{fig2}
    \end{figure}
\end{example}

\begin{corollary}
     Let $n\ge 1$. Let  $U=(U_i)_{i\ge 0}$ be a Pisot numeration system. Let $(A_i)_{i\ge 0}$ and $(B_i)_{i\ge 0}$ be two $U$-synchronized sequences.  Then the set $P=\{(A_k,B_k)\mid k\ge 0\}$ is $U$-recognizable, and thus it is decidable whether $P$ is a $1$-invariant set.
\end{corollary}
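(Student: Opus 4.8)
The plan is to reduce the claim entirely to results already established in the excerpt, so the proof becomes a short chain of implications. The final statement asserts two things: first, that the set $P=\{(A_k,B_k)\mid k\ge 0\}$ is $U$-recognizable, and second, that $1$-invariance of $P$ is then decidable. The second part is an immediate consequence of the first via Theorem~\ref{thm:main} (or Theorem~\ref{the:main} through Theorem~\ref{the:equiv}), provided we also check that $P$ contains $\mathbf{0}$ when we need to invoke Theorem~\ref{the:main} directly; in the recognizability-based formulation of Theorem~\ref{thm:main} no such hypothesis on $\mathbf{0}$ is stated, so I would cite that version to avoid worrying about whether $(A_0,B_0)=\mathbf{0}$. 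Hence the only real content to supply is the recognizability of $P$.

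To prove that $P$ is $U$-recognizable, I would work at the level of $U$-definable sets and then appeal to Theorem~\ref{the:equiv}, which gives the equivalence between $U$-recognizable and $U$-definable sets. By hypothesis, each sequence $(A_i)_{i\ge 0}$ and $(B_i)_{i\ge 0}$ is $U$-synchronized, meaning by definition that the graphs
$$\Gamma_A=\{(i,A_i)\mid i\ge 0\}\quad\text{and}\quad \Gamma_B=\{(i,B_i)\mid i\ge 0\}$$
are $U$-recognizable subsets of $\mathbb{N}^2$, and therefore $U$-definable. Let $\alpha(i,a)$ and $\beta(i,b)$ be first-order formulae in $\langle\mathbb{N},+,V_U\rangle$ defining $\Gamma_A$ and $\Gamma_B$ respectively. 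Then $P$ is defined by the formula
$$\rho(x,y)\equiv(\exists k)(\alpha(k,x)\wedge\beta(k,y)),$$
since $(x,y)\in P$ exactly when there is a common index $k$ with $x=A_k$ and $y=B_k$. This formula is a legitimate formula of the language, so $P$ is $U$-definable, and by Theorem~\ref{the:equiv} it is $U$-recognizable.

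With recognizability in hand, the decidability of $1$-invariance follows directly from Theorem~\ref{thm:main}. I do not anticipate a genuine obstacle here: the whole argument is a routine composition of definable relations using a single existential quantifier on the shared index, which is precisely the operation that first-order definability is closed under. The only point deserving a moment of care is that synchronization is a statement about graphs over a \emph{common} index variable $k$, so the two formulae $\alpha$ and $\beta$ must quantify the \emph{same} $k$; writing the conjunction under one existential quantifier, rather than two independent ones, is what encodes the constraint that $x$ and $y$ are the $k$th terms of their respective sequences simultaneously. Everything else is an immediate invocation of Theorem~\ref{the:equiv} and Theorem~\ref{thm:main}.
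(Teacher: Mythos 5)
Your proposal is correct, and it follows the same underlying idea as the paper's proof: combine the two graphs on their shared index and then eliminate that index. The difference is purely one of formalism. The paper works directly at the automaton level: it notes that the languages $\{\rep_U(A_i,i,j)\mid i,j\ge 0\}$ and $\{\rep_U(j,i,B_i)\mid i,j\ge 0\}$ are regular, intersects them to get $\{\rep_U(A_i,i,B_i)\mid i\ge 0\}$, and projects onto the first and third components, invoking only closure of regular languages under intersection and projection. You instead pass through Theorem~\ref{the:equiv} to the logical side, write the single formula $\rho(x,y)\equiv(\exists k)(\alpha(k,x)\wedge\beta(k,y))$, and pass back. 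These are two presentations of the same construction --- the existential quantifier over $k$ \emph{is} the projection, and the conjunction \emph{is} the intersection (after cylindrification), which is exactly the dictionary B\"uchi's theorem provides. What your version buys is brevity and the automatic guarantee that the synchronization on a common $k$ is handled correctly (your remark about using one quantifier rather than two is precisely the point where a careless argument would fail); what the paper's version buys is self-containedness, since it needs only elementary closure properties of regular languages rather than the full definable--recognizable equivalence. Your handling of the $\mathbf{0}$ hypothesis --- citing the recognizability formulation (Theorem~\ref{thm:main}), which states no such hypothesis, rather than Theorem~\ref{the:main} --- matches what the paper implicitly does, since the corollary as stated does not assume $(A_0,B_0)=\mathbf{0}$.
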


\begin{proof}
    The set of regular languages is closed under intersection and projection. The languages $\{\rep_U(i,A_i)\mid i\ge 0\}$ and $\{\rep_U(i,B_i)\mid i\ge 0\}$ are regular. Hence, one easily derives that the language $\{\rep_U(A_i,B_i)\mid i\ge 0\}$ is also regular. Indeed, the languages $\{\rep_U(A_i,i,j)\mid i,j\ge 0\}$ and $\{\rep_U(j,i,B_i)\mid i,j\ge 0\}$ are regular. The intersection of these two sets is the regular language $\{\rep_U(A_i,i,B_i)\mid i\ge 0\}$, and thus the projection on the first and third component also is.
\end{proof}

\section{Applications}\label{sec:5}

Let $\mathcal{A}=\{1,\ldots,n\}$ be a finite alphabet. An infinite word $w\in\mathcal{A}^{\mathbb{N}_{>0}}$ represents a subset of $\mathbb{N}^n$ as follows. It is usual to assume that each symbol occurs infinitely often in $w$. We consider the $n$-tuple $(i_{m,1},\ldots,i_{m,n})$, where $i_{m,j}$ denotes the index of the $m$th symbol $j$ occurring in $w$ (as we did in the introduction for the Tribonacci game). Recall that the first letter in $w$ has index $1$.
\begin{definition}\label{def:pw}
We let $P_w$ denote the subset of $\mathbb{N}^n$ which is made up of all these $n$-tuples described above and also their permutations, and we add the $n$-tuple $\mathbf{0}$.
\end{definition}

Therefore an infinite word $w$ over $\mathcal{A}=\{1,\ldots,n\}$ can be a convenient alternative to represent or characterize a set of $\mathcal{P}$-positions for a take-away game over $n$ piles. Particularly when the infinite word is generated by a simple procedure such as iterating a morphism defined over a finite alphabet. For an example, see \eqref{eq:tw}. Games such as Wythoff's game or the Tribonacci game have a set of $\mathcal{P}$-positions coded by a morphic word.

\begin{example}
    The infinite Fibonacci word $F=12112121121121211212112\cdots$ is a fixed point of the morphism $1\mapsto 12, 2\mapsto1$. It defines a subset $P_F$ of $\mathbb{N}^2$. The first few elements in $P_F$ are
$$(0,0), (1,2), (2,1), (3,5), (5,3), (4,7), (7,4), (6,10), (10,6), (8,13), (13,8), (9,15), (15,9), \ldots$$
It is well known that the $n$th $\mathcal{P}$-position of Wythoff's game is given by the position of the $n$th symbol $1$ and $n$th symbol $2$ occurring in $F$.
\end{example}

\begin{remark}\label{rem:AmBm}
    Note that the use of infinite words also ensures that we get a partition of $\mathbb{N}_{>0}$. Assume that we have a binary alphabet $\{1,2\}$. Let $A_m$ (resp., $B_m$) be the position of the $m$th letter $1$ (resp., $2$) in an infinite word $w$. Then $\{A_m\mid m\ge 1\}\cup \{B_m\mid m\ge 1\}=\mathbb{N}_{>0}$ and $\{A_m\mid m\ge 1\}\cap \{B_m\mid m\ge 1\}=\emptyset$. In addition, we also consider the permutations of the $n$-tuples in Definition~\ref{def:pw} since it provides "symmetric" rulesets, as it is the case in many take-away games.
\end{remark}

In what follows, we will give examples of sets $P_w$ which satisfy the condition of Theorem~\ref{the:main}, i.e., sets for which we can decide whether they are $1$-invariant or not.

\subsection{Periodic words}

If we analyze the implementation of the decision procedure derived from Theorem~\ref{the:main}, it is well known that the running time is bounded by an expression of the form
$$2^{2^{\cdot^{\cdot^{\cdot^{2^{p(n)}}}}}}$$
where $p$ is a polynomial. The height of tower of exponents corresponds to the number of nested quantifiers in the used logical formula (a quantifier leads to the construction of a non-deterministic automaton, and the subset algorithm for determinization explains this possible blow-up). One can doubt that such a bad complexity could lead to effective results.  Yet, even though we have a very bad worst case scenario, positive results have been obtained in the field of combinatorics on words. See, for instance, \cite{goc} and Remark~\ref{imp}. It motivated us to obtain a different decision procedure (not based on the first order logic arithmetic) in some particular cases. Here we consider the special case of sets $P_w$ defined by a binary periodic word $w=v^\omega=vvv\cdots$. Note that our result should remain true for words over a larger alphabet.

We explain in details the case where $v$ is a word on $\{1,2\}$ that contains as many ones than twos. Let $\ell=|v|$ denote the length of $v$. We let $|v|_a$ denote the number of occurrences of the symbol $a$ within $v$. Hence we assume that $|v|_1=|v|_2=\ell/2$.
The basic idea is that the periodicity allows us to check the $1$-invariance only on the small square $\eint{0}{\ell}^2$. The set $P_w$ is periodic in the following sense. Let $\mathbf{p}\in P_w$. If $\mathbf{p}\neq {(0,0)}$, then $\mathbf{p}+(\ell,\ell)\in P_w$.
Indeed, if $\mathbf{p}=(A_m,B_m)$ (with the notations of Remark~\ref{rem:AmBm}), then $\mathbf{p}+(\ell,\ell)=(A_{m+\ell/2},B_{m+\ell/2})$. Similarly, if $\mathbf{p}>(\ell,\ell)$, then $\mathbf{p}-(\ell,\ell)\in P_w$. In other words, if $P_w^0$ denotes the set $P_w\cap \eint{1}{\ell}^2$, then $$P_w=\{(0,0)\}\cup (P_w^0+ \{k\cdot(\ell,\ell) \ | \ k \in \mathbb N\}).$$
Furthermore, let $s$ be the unique integer such that $s \frac{\ell}{2}<m\leq (s+1)\frac{\ell}{2}$. Then $\mathbf{p}=(A_m,B_m)\in \eint{s \ell+1}{(s+1)\ell}^2$. Indeed, the $m$th occurrence of $1$ necessarily appears in $w$ in the occurrence $s+1$ of $u$. The set $P_w-P_w$ has also some periodicity as expressed in the next lemma.

\begin{lemma}\label {lem:perioddiff}
Let $w$ be a binary periodic word on $\{1,2\}$. Let $v$ be the period of $w$ and let $\ell$ be the length of $v$.
Assume that $v$ has the same number of ones and twos.
We have $$(P_w-P_w)\cap \mathbb{N}^2=\left(P_w\cap \eint{0}{2\ell}^2 - P_w\cap \eint{0}{\ell}^2\right)\cap \mathbb{N}^2+\{k\cdot (\ell,\ell) \ | \ k\in \mathbb N\}.$$
\end{lemma}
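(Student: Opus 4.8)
The plan is to prove the two inclusions separately, relying on the block decomposition already established just before the statement: every nonzero $\mathbf{p}\in P_w$ is uniquely $\mathbf{p}=\mathbf{q}+k(\ell,\ell)$ with $\mathbf{q}\in P_w^0=P_w\cap\eint{1}{\ell}^2$ and $k\ge 0$, and such a $\mathbf{p}$ lies in the diagonal block $\eint{k\ell+1}{(k+1)\ell}^2$. From this I would first record the two windows appearing in the statement by checking which values of $k$ keep a block inside a given square: $P_w\cap\eint{0}{\ell}^2=\{\mathbf{0}\}\cup P_w^0$ (only $k=0$ fits) and $P_w\cap\eint{0}{2\ell}^2=\{\mathbf{0}\}\cup P_w^0\cup(P_w^0+(\ell,\ell))$ (only $k\in\{0,1\}$ fit). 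Writing $D=(P_w-P_w)\cap\mathbb{N}^2$ for the left-hand side and $R=\bigl(P_w\cap\eint{0}{2\ell}^2-P_w\cap\eint{0}{\ell}^2\bigr)\cap\mathbb{N}^2$, the goal is $D=R+\{k\cdot(\ell,\ell)\mid k\in\mathbb{N}\}$.

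For the inclusion $\supseteq$, two observations suffice. First, since both $P_w\cap\eint{0}{2\ell}^2$ and $P_w\cap\eint{0}{\ell}^2$ are subsets of $P_w$, I get $R\subseteq D$ at once. Second, $D$ is stable under adding $(\ell,\ell)$: if $\mathbf{d}=\mathbf{a}-\mathbf{b}$ with $\mathbf{a},\mathbf{b}\in P_w$ and $\mathbf{d}\in\mathbb{N}^2$, then either $\mathbf{a}\ne\mathbf{0}$, so $\mathbf{a}+(\ell,\ell)\in P_w$ by periodicity and $\mathbf{d}+(\ell,\ell)=(\mathbf{a}+(\ell,\ell))-\mathbf{b}\in D$; or $\mathbf{a}=\mathbf{0}$, which forces $\mathbf{b}=\mathbf{0}$ and $\mathbf{d}=\mathbf{0}$, and then $\mathbf{d}+(\ell,\ell)=(\ell,\ell)=(\mathbf{q}+(\ell,\ell))-\mathbf{q}\in D$ for any $\mathbf{q}\in P_w^0$. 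Iterating this stability yields $R+\{k\cdot(\ell,\ell)\mid k\in\mathbb{N}\}\subseteq D$.

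For the inclusion $\subseteq$, I would take $\mathbf{d}=\mathbf{a}-\mathbf{b}\in D$ and argue by cases. If $\mathbf{a}=\mathbf{0}$, then $\mathbf{d}\ge\mathbf{0}$ forces $\mathbf{b}=\mathbf{0}$ and $\mathbf{d}=\mathbf{0}\in R$; if $\mathbf{a}\ne\mathbf{0}=\mathbf{b}$, writing $\mathbf{a}=\mathbf{q}+k(\ell,\ell)$ gives $\mathbf{q}=\mathbf{q}-\mathbf{0}\in R$ and $\mathbf{d}=\mathbf{q}+k(\ell,\ell)$. In the main case $\mathbf{a},\mathbf{b}\ne\mathbf{0}$, write $\mathbf{a}=\mathbf{q}+k(\ell,\ell)$ and $\mathbf{b}=\mathbf{r}+j(\ell,\ell)$ with $\mathbf{q},\mathbf{r}\in P_w^0$, so $\mathbf{d}=(\mathbf{q}-\mathbf{r})+t(\ell,\ell)$ with $t=k-j\in\mathbb{Z}$. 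The numerical input is that $\mathbf{q},\mathbf{r}\in\eint{1}{\ell}^2$ makes each coordinate of $\mathbf{q}-\mathbf{r}$ lie between $-(\ell-1)$ and $\ell-1$; together with $\mathbf{d}\ge\mathbf{0}$ this forces $t\ge 0$. When $t=0$, then $\mathbf{d}=\mathbf{q}-\mathbf{r}\ge\mathbf{0}$ with $\mathbf{q}\in P_w\cap\eint{0}{2\ell}^2$ and $\mathbf{r}\in P_w\cap\eint{0}{\ell}^2$, so $\mathbf{d}\in R$. When $t\ge 1$, then $\mathbf{q}+(\ell,\ell)\in P_w\cap\eint{0}{2\ell}^2$ and $(\mathbf{q}-\mathbf{r})+(\ell,\ell)\ge(1,1)$, so $\mathbf{r}'=(\mathbf{q}-\mathbf{r})+(\ell,\ell)\in R$ and $\mathbf{d}=\mathbf{r}'+(t-1)(\ell,\ell)$ lands in $R+\{k\cdot(\ell,\ell)\mid k\in\mathbb{N}\}$.

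The step I expect to be the main obstacle is exactly the bookkeeping in this last case, and it explains the asymmetry of the two windows in the statement. The difference $\mathbf{q}-\mathbf{r}$ of two elements of a single block can dip as low as $-(\ell-1)$ in each coordinate, so one extra block of room is needed on the minuend side (window $\eint{0}{2\ell}^2$ versus $\eint{0}{\ell}^2$) to absorb this overhang while keeping the representative inside the square and inside $\mathbb{N}^2$. Verifying that a single extra block always suffices — that $t\ge 0$ and that subtracting one copy of $(\ell,\ell)$ produces an element of $R$ rather than one with a negative coordinate — is precisely where the hypothesis $|v|_1=|v|_2=\ell/2$, which guarantees the clean diagonal blocks, is genuinely used.
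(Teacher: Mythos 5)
Your proof is correct and follows essentially the same route as the paper: decompose nonzero elements of $P_w$ as a block representative in $P_w^0$ plus a multiple of $(\ell,\ell)$, shift the minuend into $\eint{0}{2\ell}^2$ and the subtrahend into $\eint{0}{\ell}^2$ while collecting leftover copies of $(\ell,\ell)$, and use periodicity of $P_w$ for the reverse inclusion. The only (welcome) difference is that you treat the degenerate cases involving $\mathbf{0}$ explicitly, which the paper's proof glosses over.
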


\begin{proof}
Let $\mathbf{m} \in (P_w-P_w)\cap \mathbb{N}^2$. We can write $\mathbf{m}=\mathbf{p_1}-\mathbf{p_2}$ with $\mathbf{p_1}\geq \mathbf{p_2}$ both in $P_w$. As noticed before, $\mathbf{p_1}$ (respectively $\mathbf{p_2}$) lies in a set  $\eint{s \ell+1}{(s+1)\ell}^2$ for some $s$ (respectively in $\eint{t \ell+1}{(t+1)\ell}^2$).

Since $\mathbf{p_1}\geq \mathbf{p_2}$, we have $s\geq t$.
We can write, if $s\neq t$, $$\mathbf{m} = \left (\mathbf{p_1}-(s-1)\cdot (\ell,\ell)\right) - \left(\mathbf{p_2} - t \cdot (\ell,\ell)\right) + (s-t-1)\cdot(\ell,\ell)$$
and if $s=t$,
$$\mathbf{m} = \left(\mathbf{p_1}-s\cdot (\ell,\ell)\right) - \left(\mathbf{p_2} - s \cdot (\ell,\ell)\right).$$

Since $\mathbf{p_1}-s\cdot (\ell,\ell)$ and all the similar positions in the previous expressions are elements of $P_w$, we are done.

Assume now that $\mathbf{m}\in (P_w\cap \eint{0}{2\ell}^2 - P_w\cap \eint{0}{\ell}^2)+\{k\cdot (\ell,\ell) \ | \ k\in \mathbb N\}$.
Then $\mathbf{m}=\mathbf{p_1}-\mathbf{p_2} + k\cdot(\ell,\ell)$ for some $\mathbf{p_1}$, $\mathbf{p_2}$ and $k$.
If $\mathbf{m}\neq (0,0)$, then $\mathbf{p_1}\neq (0,0)$ and $\mathbf{p_1}+k\cdot(\ell,\ell)\in P_w$. Thus $\mathbf{m}\in P_w-P_w$.\end{proof}

The next lemma says that we only need to check the invariance on the small positions.

\begin{lemma}
Let $w$ be a binary periodic word on $\{1,2\}$. Let $v$ be the period of $w$ and let $\ell$ be the length of $v$.
Assume that $v$ has the same number of ones and twos. Then the set $P_w$ is $1$-invariant if and only if for each position $\mathbf{x}$ in $\eint{0}{2\ell}^2$ not in $P_w$, there is a position $\mathbf{p}\in P_w$ such that $\mathbf{p}\leq \mathbf{x}$ and $\mathbf{x}-\mathbf{p}\notin P_w-P_w$.
\end{lemma}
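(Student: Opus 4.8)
The plan is to reduce the $1$-invariance criterion from Lemma~\ref{lem:pos} to a finite check on the small square $\eint{0}{2\ell}^2$, using the two periodicity facts already established: that $P_w=\{(0,0)\}\cup(P_w^0+\{k\cdot(\ell,\ell)\mid k\in\mathbb{N}\})$, and that $(P_w-P_w)\cap\mathbb{N}^2$ is itself $(\ell,\ell)$-periodic (Lemma~\ref{lem:perioddiff}). By Lemma~\ref{lem:pos}, $P_w$ is $1$-invariant if and only if for every $\mathbf{x}\notin P_w$ there exists $\mathbf{p}\in P_w$ with $\mathbf{p}\le\mathbf{x}$ and $\mathbf{x}-\mathbf{p}\notin P_w-P_w$. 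The forward direction of the lemma is trivial: if $P_w$ is $1$-invariant, the condition holds for \emph{all} $\mathbf{x}\notin P_w$, in particular those in $\eint{0}{2\ell}^2$. So the content is the converse: I would show that if the winning-move condition holds on every non-$P_w$ position in $\eint{0}{2\ell}^2$, then it holds on every non-$P_w$ position in $\mathbb{N}^2$.

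The key step is a translation argument. Take an arbitrary $\mathbf{x}\notin P_w$. First I would argue we may assume $\mathbf{x}$ lies high enough that some power of $(\ell,\ell)$ can be subtracted while staying in the first quadrant; concretely, pick the largest $k\ge 0$ such that $\mathbf{x}':=\mathbf{x}-k\cdot(\ell,\ell)$ still lies in $\eint{0}{2\ell}^2$ (so $\mathbf{x}'$ sits in the reference square, shifted down by a multiple of the period). By the $(\ell,\ell)$-periodicity of $P_w$ away from the origin, $\mathbf{x}\notin P_w$ forces $\mathbf{x}'\notin P_w$ as well, so the finite hypothesis supplies $\mathbf{p}'\in P_w$ with $\mathbf{p}'\le\mathbf{x}'$ and $\mathbf{x}'-\mathbf{p}'\notin P_w-P_w$. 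The natural candidate for a winning move from $\mathbf{x}$ is then $\mathbf{p}:=\mathbf{p}'+k\cdot(\ell,\ell)$. I would check that $\mathbf{p}\in P_w$ (using the periodicity of $P_w$, noting $\mathbf{p}'\ne(0,0)$ when $k>0$, or treating the origin separately), that $\mathbf{p}\le\mathbf{x}$ (which follows from $\mathbf{p}'\le\mathbf{x}'$ by adding $k\cdot(\ell,\ell)$ to both sides), and crucially that $\mathbf{x}-\mathbf{p}=\mathbf{x}'-\mathbf{p}'\notin P_w-P_w$. The last point is immediate since the translation by $k\cdot(\ell,\ell)$ cancels in the difference.

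The main obstacle I anticipate is handling the boundary bookkeeping cleanly, in two respects. First, the value of $\mathbf{x}-\mathbf{p}$ is unchanged under the shift, so the condition $\mathbf{x}-\mathbf{p}\notin P_w-P_w$ transfers verbatim; that is the easy half. The genuinely delicate point is why $\eint{0}{2\ell}^2$, rather than $\eint{0}{\ell}^2$, is the correct window: the difference $P_w-P_w$ is described in Lemma~\ref{lem:perioddiff} using $P_w\cap\eint{0}{2\ell}^2$, so to guarantee that a difference $\mathbf{x}'-\mathbf{p}'$ computed inside the reference square faithfully reflects membership in the full $P_w-P_w$, the window must be wide enough to see one extra period. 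I would make this explicit by showing that when $\mathbf{x}'\in\eint{0}{2\ell}^2$ and $\mathbf{p}'\in P_w$ with $\mathbf{p}'\le\mathbf{x}'$, the membership of $\mathbf{x}'-\mathbf{p}'$ in $P_w-P_w$ is already determined by the data visible in $\eint{0}{2\ell}^2$, invoking Lemma~\ref{lem:perioddiff} to discard the spurious contributions of far-away pairs. Once that window-width justification is in place, the translation argument closes the equivalence, and the finiteness of $\eint{0}{2\ell}^2$ yields the promised independent, non-logical decision algorithm.
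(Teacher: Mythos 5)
Your skeleton (reduce via Lemma~\ref{lem:pos}, observe the forward direction is trivial, translate a general position into the window by subtracting multiples of $(\ell,\ell)$) matches the paper's strategy for \emph{near-diagonal} positions, but there is a genuine gap: the translation step is impossible for positions far from the diagonal, and you never address them. If $\mathbf{x}=(x_1,x_2)\notin P_w$ has $|x_1-x_2|>2\ell$ (e.g.\ $\mathbf{x}=(1,10\ell)$), then no $k\ge 0$ makes $\mathbf{x}-k\cdot(\ell,\ell)$ lie in $\eint{0}{2\ell}^2$: subtracting $k\cdot(\ell,\ell)$ does not change the difference of the two coordinates, while every point of $\eint{0}{2\ell}^2$ has coordinates differing by at most $2\ell$. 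So ``pick the largest $k\ge 0$ such that $\mathbf{x}'\in\eint{0}{2\ell}^2$'' is vacuous for such $\mathbf{x}$. The paper spends half of its proof on precisely these positions: it first shows that if some $\mathbf{p}\in P_w$ with $\mathbf{p}\le\mathbf{x}$ has $\mathbf{x}-\mathbf{p}$ of the form $(k,0)$ or $(0,k)$, this move already works, because the complementarity of the sequences $(A_m)$ and $(B_m)$ (Remark~\ref{rem:AmBm}) forces every nonzero element of $(P_w-P_w)\cap\mathbb{N}^2$ to have both coordinates nonzero; it then shows that any $\mathbf{x}$ escaping this case must have the form $(A_m,A_{m'})$ with $A_m<B_{m'}$ and $A_{m'}<B_m$, which forces $\mathbf{x}$ into a diagonal square $\eint{s\ell+1}{(s+1)\ell}^2$. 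Only after this reduction is the translation argument available. This complementarity step is an independent idea, not boundary bookkeeping, and without it the converse direction does not go through.

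A second, smaller gap is the sub-case where the finite check returns $\mathbf{p}'=(0,0)$. You correctly note that $\mathbf{p}'+k\cdot(\ell,\ell)=k\cdot(\ell,\ell)\notin P_w$, but ``treating the origin separately'' is where real work is required: the paper plays the move $\mathbf{x}\to(0,0)$ and must then prove $\mathbf{x}\notin P_w-P_w$ knowing only $\mathbf{x}'\notin P_w-P_w$. This is exactly where Lemma~\ref{lem:perioddiff} enters, and it is why the paper shifts by $(s-1)\cdot(\ell,\ell)$ so that $\mathbf{x}'$ lands in the \emph{outer} ring $\eint{\ell+1}{2\ell}^2$: then any writing $\mathbf{x}=\mathbf{p_1}-\mathbf{p_2}+t\cdot(\ell,\ell)$ yields $\mathbf{x}'=\mathbf{p_1}-\mathbf{p_2}+\delta\cdot(\ell,\ell)$ with $\delta\in\{0,1\}$, contradicting $\mathbf{x}'\notin P_w-P_w$. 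Your choice of the \emph{largest} admissible $k$ would land $\mathbf{x}'$ in $\eint{1}{\ell}^2$ and admit $\delta=-1$, a case Lemma~\ref{lem:perioddiff} does not cover without an additional downward-periodicity argument. Relatedly, your third paragraph misplaces the role of the $2\ell$ window: transferring $\mathbf{x}-\mathbf{p}=\mathbf{x}'-\mathbf{p}'\notin P_w-P_w$ is, as you say yourself, verbatim and needs no justification; the width $2\ell$ is needed for this origin sub-case, not for that transfer.
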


\begin{proof}
By Lemma~\ref{lem:pos}, $P_w$ is $1$-invariant if and only the game $H(P_w)$ is $1$-invariant. This is equivalent to say that for any position $\mathbf{x}\notin P_w$, there is a position $\mathbf{p}\in P_w$ such that $\mathbf{p}\leq \mathbf{x}$ and $\mathbf{x}-\mathbf{p}\notin P_w-P_w$.
Therefore assume that the latter fact is true for  $\mathbf{x}\in \eint{0}{2\ell}^2\setminus P_w$. To prove the lemma we need to prove the fact for any $\mathbf{x}\notin P_w$. We thus can assume that $\mathbf{x}\notin \eint{0}{2\ell}^2$.

Assume first that there exists an element $\mathbf{p}$ of $P_w$ such that $\mathbf{m}=\mathbf{x}-\mathbf{p}$ is equal to $(0,k)$ or $(k,0)$ for some $k$. Since the two sets $\{A_m, m\in \mathbb N_{>0}\}$ and $\{B_m, m\in \mathbb N_{>0}\}$ form a partition of $\mathbb N_{>0}$, we have $\mathbf{m}\notin P_w-P_w$, and we are done.

Assume now that this is not the case. Since the two sets $\{A_m, m\in \mathbb N_{>0}\}$ and $\{B_m, m\in \mathbb N_{>0}\}$ form a partition, we must have, without loss of generality, $\mathbf{x}=(A_m,y)$ for some $m,y$. Since $(A_m,B_m)-\mathbf{x}\neq (0,k)$ for $k\geq 0$, we  have $y<B_m$. In the same way, we also have $\mathbf{x}=(x,A_{m'})$ for some $m',x$, and $x<B_{m'}$ (it is not possible to have $y=B_{m'}$ since the positions $(A_k,B_k)$ are strictly increasing). Finally $\mathbf{x}=(A_m,A_{m'})$ with $A_m<B_{m'}$ and $A_{m'}<B_m$.
The two positions $(A_m,B_m)$ and $(B_{m'},A_{m'})$ are necessarily in the same square $\eint{s \ell+1}{(s+1)\ell}^2$ for some $s$. Hence $\mathbf{x}$ is also in this square and since $\mathbf{x}\notin \eint{0}{2\ell}^2$, we have $s\geq 2$.

We now consider the position $\mathbf{x'}=\mathbf{x}-(s-1)\cdot(\ell,\ell)$. We have $\mathbf{x'}\in \eint{0}{2\ell}^2\setminus P_w$. By hypothesis, there exists a position $\mathbf{p'}\in P_w$ such that $\mathbf{m'}=\mathbf{x'}-\mathbf{p'}\notin P_w-P_w$.
If $\mathbf{p'}\neq (0,0)$, we play the move $\mathbf{m}=\mathbf{m'}$ from $\mathbf{x}$ to the position $\mathbf{p}=\mathbf{p'}+(s-1)\cdot(\ell,\ell)$. Clearly $\mathbf{p}\in P_w$.

If $\mathbf{p'}=(0,0)$, then $\mathbf{x'}\notin P_w-P_w$, and we  will play the move $\mathbf{m}=\mathbf{x}$ to $\mathbf{p}=(0,0)$. Assume for the contradiction that $\mathbf{m}\in P_w-P_w$. Then by Lemma \ref{lem:perioddiff},  $\mathbf{x}=\mathbf{p_1}-\mathbf{p_2}+t\cdot(\ell,\ell)$ for some $\mathbf{p_1}$, $\mathbf{p_2}$ and $t$.
But we also have $\mathbf{x}=\mathbf{x'}+(s-1)\cdot(\ell,\ell)$. Since $\mathbf{p_1}-\mathbf{p_2}\in \eint{0}{2\ell}^2$ and $\mathbf{x'}\in \eint{\ell+1}{2\ell}^2$, we have $s=t$ or $s=t+1$. Then necessarily, $\mathbf{x'}=\mathbf{p_1}-\mathbf{p_2}+\delta\cdot(\ell,\ell)$ with $\delta\in \{0,1\}$. This is a contradiction since $\mathbf{x'}\notin P_w-P_w$.

\end{proof}

A natural algorithm can be deduced from the previous lemma to decide the $1$-invariance of $P_w$. Indeed, it is enough to consider the (at most) $4\ell^2$ elements $\mathbf{x}$ of $\eint{0}{2\ell}^2\setminus P_w$. For each of them, consider the at most $2\ell+1$ possible positions $\mathbf{p}\leq \mathbf{x}$ in $P_w$ and check if $\mathbf{m}=\mathbf{x}-\mathbf{p}\in P_w-P_w$. By Lemma \ref{lem:perioddiff}, since $\mathbf{m} \in \eint{0}{2\ell}^2$, if $\mathbf{m}\in P_w-P_w$, then
$$\mathbf{m}\in \left(P_w\cap \eint{0}{\ell}^2 - P_w\cap \eint{0}{\ell}^2\right)\cap \mathbb{N}^2 + \{(0,0),(\ell,\ell)\}.$$ We can compute this latter set once for all the procedure in $O(\ell^2)$ steps. Thus the total procedure runs in polynomial time in $\ell$.

\begin{corollary}
Let $w$ be a binary periodic word on $\{1,2\}$ with period $v$.
Assume that $v$ has the same number of ones and twos.
There is an algorithm running in polynomial time in $|v|$ that decides if the set $P_w$ is 1-invariant.
\end{corollary}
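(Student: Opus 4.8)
The plan is to assemble the corollary from the reduction established in the preceding lemma together with an explicit, carefully budgeted search. Write $\ell=|v|$, so by hypothesis $|v|_1=|v|_2=\ell/2$. The preceding lemma already tells us that $P_w$ is $1$-invariant \emph{if and only if} a purely local condition holds: for every $\mathbf{x}\in\eint{0}{2\ell}^2\setminus P_w$ there exists $\mathbf{p}\in P_w$ with $\mathbf{p}\le\mathbf{x}$ and $\mathbf{x}-\mathbf{p}\notin P_w-P_w$. Since only the finite box $\eint{0}{2\ell}^2$ is involved, the decision is already reduced to a finite search; the entire remaining content of the corollary is that this search can be arranged to run in time polynomial in $\ell$.

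First I would enumerate the data living inside the box. Because the pairs $(A_m,B_m)$ are strictly increasing and each period of $w$ contributes exactly $\ell/2$ occurrences of each letter, the two squares $\eint{1}{\ell}^2$ and $\eint{\ell+1}{2\ell}^2$ together carry $\ell$ staircase points; adjoining their permutations and $\mathbf{0}$ shows $P_w\cap\eint{0}{2\ell}^2$ has at most $2\ell+1$ elements, all computable directly from $v$ in $O(\ell)$ time. The candidate targets $\mathbf{x}\in\eint{0}{2\ell}^2\setminus P_w$ number at most $4\ell^2$, and for each of them the admissible moves come from the at most $2\ell+1$ points $\mathbf{p}\in P_w$ with $\mathbf{p}\le\mathbf{x}$, obtained by a single scan of the staircase.

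The last ingredient is a fast membership test for $\mathbf{m}=\mathbf{x}-\mathbf{p}\in P_w-P_w$, and here Lemma~\ref{lem:perioddiff} is the workhorse. Since $\mathbf{m}\in\eint{0}{2\ell}^2$, that lemma forces $\mathbf{m}\in P_w-P_w$ to imply
$$\mathbf{m}\in\left(P_w\cap\eint{0}{\ell}^2 - P_w\cap\eint{0}{\ell}^2\right)\cap\mathbb{N}^2+\{(0,0),(\ell,\ell)\}.$$
I would precompute this explicit finite set once at the outset: $P_w\cap\eint{0}{\ell}^2$ has $O(\ell)$ points, its self-difference has $O(\ell^2)$ points, and adjoining the two translates keeps the size $O(\ell^2)$, so storing it in a lookup table costs $O(\ell^2)$ time and space and renders every subsequent membership test $O(1)$. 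Combining the pieces, the precomputation is $O(\ell^2)$ and the main loop performs at most $(4\ell^2)(2\ell+1)$ constant-time tests, for an $O(\ell^3)$ total; the procedure answers affirmatively exactly when the local condition of the preceding lemma holds, which by that lemma is equivalent to the $1$-invariance of $P_w$. The genuine difficulties — reducing an unbounded problem to the finite window and controlling the structure of $P_w-P_w$ — have already been absorbed into the preceding lemma and into Lemma~\ref{lem:perioddiff}; the only care required here is the bookkeeping that confines every quantity to the $O(\ell)\times O(\ell)$ window, so that no step of the algorithm escapes polynomial size in $|v|$.
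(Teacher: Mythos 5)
Your proposal is correct and follows essentially the same route as the paper: invoke the preceding lemma to reduce to the finite window $\eint{0}{2\ell}^2$, loop over the at most $4\ell^2$ non-$P_w$ positions and the at most $2\ell+1$ candidate moves to points of $P_w$, and use Lemma~\ref{lem:perioddiff} to precompute, once and in $O(\ell^2)$ time, the set $\left(P_w\cap\eint{0}{\ell}^2-P_w\cap\eint{0}{\ell}^2\right)\cap\mathbb{N}^2+\{(0,0),(\ell,\ell)\}$ serving as an $O(1)$ membership test for $P_w-P_w$. Your explicit $O(\ell^3)$ bound merely sharpens the paper's unstated "polynomial time" claim.
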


\begin{remark}
If $v$ does not contain the same number of ones and twos, we still have some periodicity in $P_w$ and $P_w-P_w$. Assume that $|v|_1<|v|_2$.
Let $r$ be the least common multiple of $|v|_1$ and $|v|_2$. Let $\mathbf{u}=(\frac{r}{|v|_1}\cdot |v|,\frac{r}{|v|_2}\cdot|v|)$.
Then one can check that for any $m>0$, $(A_m,B_m)+\mathbf{u}=(A_{m+r},B_{m+r})$ and thus is an element of $P_w$. Let $\mathbf{u^R}$ be the symmetric of $\mathbf{u}$ obtained by permuting its coordinates. We have:
$$P_w=\{(0,0)\}\cup \left(\{(A_m,B_m) | 1\leq m\leq r\} + \mathbb{N} \mathbf{u}\right)
\cup \left(\{(B_m,A_m) | 1\leq m\leq r\} + \mathbb{N} \mathbf{u^R}\right).$$

The set $P_w-P_w$ can be seen as the union of four different parts $D_1$, $D_2$, $D_3$ and $D_4$. The first part $D_1$ is similar to the previous case and corresponds to the difference between two elements of $P_w$ on the same form $(A_m,B_m)$ (with the convention that $(A_0,B_0)=(0,0)$):

$$D_1=\{(A_m-A_p,B_m-B_p) \ | \ 0\leq p\leq m\leq 2r \text{ and } p\leq r\} + \mathbb{N}\mathbf{u}.$$

The set $D_2$ is the symmetric of $D_1$, it corresponds to the differences between two elements $(B_m,A_m)$. The set $D_3$ corresponds to the difference between a position $(A_m,B_m)$ and a position $(B_p,A_p)$:

$$D_3=\left(\{(A_m-B_p,B_m-A_p) \ | \ 0\leq p\leq m\leq (k_0+1)r \text{ and } p\leq r\} + \{k_1\mathbf{u}-k_2\mathbf{u^R} \ | \ k_1,k_2\in \mathbb N\}\right)\cap{\mathbb N^2}$$

where $k_0$ is such that $k_0\geq\tfrac{|v|_2}{|v|_1}$.
Finally, the set $D_4$ is the symmetric of $D_3$.

With the same argument as before, it should be enough to check the $1$-invariance on the set $\eint{0}{(k_0+1)r}^2$. This would lead to a polynomial-time algorithm (in the length of $|v|$).
\end{remark}

\begin{remark}
    If $w=v^\omega$, then $P_w$ is definable in $\langle \mathbb{N},+\rangle$. So instead of our specific algorithm, one can also use the general procedure given by Theorem~\ref{the:main}.
\end{remark}

\subsection{Parikh-constant morphic words}
We turn to a slightly more general situation than $w=v^\omega$ and consider words of a special form obtained by iterating a morphism. Here, the infinite word $w$ is of the form $v_1v_2v_3\cdots$ and every finite word $v_i$ is a permutation of the letters of every $v_j$.
Let $u$ be a finite word.  A morphism $f:\mathcal{A}^*\to \mathcal{A}^*$ is {\em $\ell$-uniform} if, for all $a\in\mathcal{A}$, $|f(a)|=\ell$.
A morphism $f:\mathcal{A}^*\to \mathcal{A}^*$ is {\em Parikh-constant} if, for all $a,b,c\in\mathcal{A}$, $|f(a)|_c=|f(b)|_c$. Note that if $f$ is Parikh-constant, for all $a,b\in\mathcal{A}$, $|f(a)|=|f(b)|$, i.e., $f$ is $\ell$-uniform for some $\ell$.

\begin{lemma}\cite{BHMV}
    Let $\ell\ge 2$ be an integer. Let $w=w_1w_2w_3\cdots\in\mathcal{A}^{\mathbb{N}_{>0}}$ be a fixed point of a $\ell$-uniform morphism. For every $a\in\mathcal{A}$, the set $\{i>0\mid w_i=a\}$ is $\ell$-definable.
\end{lemma}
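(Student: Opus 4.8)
The plan is to prove that for a fixed point $w=w_1w_2w_3\cdots$ of an $\ell$-uniform morphism $f$, the set $\{i>0\mid w_i=a\}$ is $\ell$-definable, by exploiting the self-similar structure that the uniform morphism imposes on the base-$\ell$ expansions of positions. The key observation is that if $w$ is a fixed point of $f$, then $w=f(w)$, so the letter at position $i$ is determined by first decomposing $i$ according to which block $f(w_j)$ it falls into. Concretely, writing $i-1 = \ell\cdot q + s$ with $0\le s<\ell$ (Euclidean division, which is expressible in $\langle\mathbb{N},+,V_U\rangle$ by Remark~\ref{rem:var}), the $i$th letter of $w$ equals the $(s+1)$th letter of $f(w_{q+1})$. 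Thus $w_i$ depends only on the remainder $s$ and on the letter $w_{q+1}$, i.e., on the letter at a position with strictly shorter base-$\ell$ expansion.

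First I would set up the recursion carefully. Since $f$ is $\ell$-uniform, each image $f(b)$ is a word of length $\ell$, say $f(b)=f(b)_1\cdots f(b)_\ell$. The equation $w=f(w)$ gives, for every $j\ge 1$ and every $s\in\{1,\ldots,\ell\}$, that the letter at position $(j-1)\ell+s$ equals $f(w_j)_s$. I would then read off the base-$\ell$ expansion of $i-1$: if $\rep_\ell(i-1)=c_{t}c_{t-1}\cdots c_1 c_0$, then peeling off the least significant digit $c_0$ amounts to one application of the decomposition above, so the letter $w_i$ is obtained by starting from the first letter $w_1$ (the fixed-point seed) and successively applying the rule ``from letter $b$ and digit $c$, move to the letter $f(b)_{c+1}$'' as we read the digits of $i-1$ from most significant to least significant. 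In other words, $w_i=a$ precisely when a finite deterministic automaton reading $\rep_\ell(i-1)$ most-significant-digit-first, with states indexed by the alphabet $\mathcal{A}$ and transitions $b\xrightarrow{c}f(b)_{c+1}$, ends in state $a$.

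This immediately suggests the cleanest route: rather than building the defining formula by hand, I would invoke Theorem~\ref{the:equiv}, which says $\ell$-recognizable is equivalent to $\ell$-definable. The automaton just described recognizes exactly $\{\rep_\ell(i)\mid w_{i+1}=a\}$ over the one-letter-tuple alphabet $\{0,\ldots,\ell-1\}$, and a trivial shift (composing with the successor function, which is $\ell$-definable by Remark~\ref{rem:var}) converts between indexing $i$ and $i-1$. Hence the set $\{i>0\mid w_i=a\}$ is $\ell$-recognizable, and by Theorem~\ref{the:equiv} it is $\ell$-definable. I would need to check the minor boundary bookkeeping: leading zeroes are harmless (the automaton should fix its start state at $w_1$ and the loop on digit $0$ from $w_1$ must return to $w_1$, which holds because $f(w_1)_1=w_1$ as $w$ is the fixed point), and the off-by-one between position and digit indexing must be handled consistently.

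The main obstacle I anticipate is not conceptual but one of careful indexing: aligning the ``position $i$ versus digit string of $i-1$'' convention so that the automaton's start behaviour on leading zeroes is consistent with the fixed-point normalization $f(w_1)_1=w_1$, and making sure the first-letter seed is treated correctly. A secondary point is that the lemma as stated assumes $w$ is a genuine fixed point; if one only has a morphic (rather than purely morphic) word, an additional coding morphism would be needed, but since the statement restricts to a fixed point of $f$, the automaton construction above suffices directly.
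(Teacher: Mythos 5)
The paper offers no proof of this lemma at all---it is imported verbatim from \cite{BHMV}---so there is no internal argument to compare against; your proposal is correct and is essentially the standard proof underlying that reference. The fixed-point equation $w=f(w)$ gives the recursion $w_{\ell q+s+1}=f(w_{q+1})_{s+1}$ for $0\le s<\ell$, which is exactly realized by your DFA with state set $\mathcal{A}$, initial state $w_1$ and transitions $b\xrightarrow{c}f(b)_{c+1}$ reading $\rep_\ell(i-1)$ most-significant-digit first, with the fixed-point condition $f(w_1)_1=w_1$ ensuring leading zeroes are harmless; Theorem~\ref{the:equiv} plus the definability of the successor (Remark~\ref{rem:var}) then converts $\ell$-recognizability of $\{i>0\mid w_i=a\}$ into $\ell$-definability. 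The only point worth noting is that your proof is a reduction to the nontrivial equivalence of Theorem~\ref{the:equiv} rather than a direct construction of the formula $\chi_{w,a}$ in $\langle\mathbb{N},+,V_\ell\rangle$, but this is precisely how the cited literature proceeds, so the route is the intended one.
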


We let $\chi_{w,a}$ denote the formula defining the set $\{i>0\mid w_i=a\}$ given in the above lemma. Otherwise stated, $\chi_{w,a}(n)$ holds if and only if $n$ belongs to that set.

\begin{proposition}
     Let $w=w_1w_2w_3\cdots\in\mathcal{A}^{\mathbb{N}_{>0}}$ be a fixed point of a Parikh-constant morphism $f:\mathcal{A}^*\to \mathcal{A}^*$. One can decide whether $P_w$ is a $1$-invariant subset of $\mathbb{N}^{\#\mathcal{A}}$.
\end{proposition}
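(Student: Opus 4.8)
The plan is to prove that $P_w$ is $\ell$-definable, where $\ell$ is the uniformity constant of $f$, and then to invoke Theorem~\ref{the:main} with the (Pisot) base-$\ell$ numeration system $U=(\ell^i)_{i\ge 0}$. Write $\mathcal{A}=\{1,\ldots,n\}$ with $n=\#\mathcal{A}$. Since $f$ is Parikh-constant, it is $\ell$-uniform with $\ell=|f(a)|$, and the quantity $q_c:=|f(a)|_c$ does not depend on the letter $a$; in particular $\sum_{c\in\mathcal{A}}q_c=\ell$ and (as the letters occur infinitely often when $n\ge 2$) we may assume $\ell\ge 2$. For each letter $a$, the cited lemma provides the formula $\chi_{w,a}$ defining $\{i>0\mid w_i=a\}$, so this single ingredient encodes the ``automatic'' behaviour of $w$; everything else will be built inside $\langle\mathbb{N},+,V_\ell\rangle$ using integer constants, multiplication by a constant, and Euclidean division by $\ell$, all available by Remark~\ref{rem:var}.

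The heart of the argument is that, for each $a\in\mathcal{A}$, the counting function $\pi_a(x)=\#\{i\in\eint{1}{x}\mid w_i=a\}$ is $\ell$-synchronized, i.e., its graph $\{(x,\pi_a(x))\mid x\ge 0\}$ is $\ell$-recognizable. This is exactly where Parikh-constancy is essential. Writing $w=f(w_1)f(w_2)\cdots$, the block of positions $\eint{(k-1)\ell+1}{k\ell}$ carries the factor $f(w_k)$, which contains exactly $q_a$ occurrences of $a$ \emph{regardless of the letter $w_k$}. Hence, for $x=k\ell+r$ with $0\le r<\ell$, one obtains the closed form
$$\pi_a(x)=q_a\,k+p_a(w_{k+1},r),$$
where $p_a(b,r)$ denotes the number of occurrences of $a$ among the first $r$ letters of $f(b)$, a constant depending only on the finitely many pairs $(b,r)\in\mathcal{A}\times\eint{0}{\ell-1}$ (and $p_a(b,0)=0$). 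The relation $y=\pi_a(x)$ is therefore expressed by a formula that recovers $k=\lfloor x/\ell\rfloor$ and $r=x-\ell k$, determines the letter $b=w_{k+1}$ through a finite disjunction over the predicates $\chi_{w,b}(k+1)$, and then equates $y$ with the linear term $q_a k+p_a(b,r)$; by Theorem~\ref{the:equiv} this relation is $\ell$-definable.

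With the counting relations in hand, I would describe $P_w$ directly from Definition~\ref{def:pw}. An $n$-tuple $(x_1,\ldots,x_n)$ equals the tuple $(i_{m,1},\ldots,i_{m,n})$ of indices of the $m$th occurrences, for a common rank $m$, precisely when $w_{x_j}=j$ and $\pi_j(x_j)=m$ for every $j$; so the ``diagonal'' part of $P_w$ is defined by
$$\delta(x_1,\ldots,x_n)\equiv\bigwedge_{j=1}^{n}\chi_{w,j}(x_j)\;\wedge\;\bigwedge_{j=2}^{n}\bigl(\pi_j(x_j)=\pi_1(x_1)\bigr),$$
which is $\ell$-definable by the previous step. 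Since $\ell$-definable sets are closed under permutation of coordinates, finite union, and adjunction of a single point, the set
$$P_w=\{\mathbf{0}\}\cup\bigcup_{\sigma}\{(x_{\sigma(1)},\ldots,x_{\sigma(n)})\mid\delta(x_1,\ldots,x_n)\},$$
with the union running over all permutations $\sigma$ of $\eint{1}{n}$, is $\ell$-definable. Feeding this description into Theorem~\ref{the:main} yields the decidability of the $1$-invariance of $P_w$.

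The only genuinely delicate point is the synchronization of the counting functions $\pi_a$; this is where the Parikh-constant hypothesis is indispensable, as it forces every block $f(w_k)$ to contribute the same letter-count $q_a$ and thereby reduces $\pi_a$ to an affine function of $\lfloor x/\ell\rfloor$ corrected by a bounded, letter-dependent term. Once this is established, the assembly of $P_w$ and the reduction to Theorem~\ref{the:main} are routine. I would also double-check the indexing convention relating the fixed-point enumeration to $\chi_{w,a}$ (whether $w$ is read $0$- or $1$-indexed), but this amounts only to a harmless off-by-one bookkeeping that shifts the argument of $\chi_{w,b}$ by a constant.
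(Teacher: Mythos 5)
Your proposal is correct and takes essentially the same approach as the paper: both reduce the problem to showing that $P_w$ is $\ell$-definable and then invoke Theorem~\ref{the:main}, with the cited lemma supplying $\chi_{w,a}$ and Parikh-constancy guaranteeing that every length-$\ell$ block $f(w_k)$ contributes the same letter counts. The only difference is bookkeeping: the paper defines the predicate $g_{w,a}(m,j)$ (``the $m$th occurrence of $a$ sits at position $j$'') by decomposing the rank $m=q\cdot N_a+r$, whereas you define the counting function $\pi_a(x)$ by decomposing the position $x=k\ell+r$ --- two dual parametrizations of the same synchronized relation, assembled into $P_w$ (with permutations and $\mathbf{0}$ adjoined) in the same way.
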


\begin{proof}
The morphism $f$ is $\ell$-uniform for an $\ell$.
    From Theorem~\ref{the:main}, we simply need to prove that $P_w$ is $\ell$-definable. Since $f$ is Parikh-constant, note that $\ell=|f(a)|$ and $N_b:=|f(a)|_b$, $b\in\mathcal{A}$, are given constants depending only on $f$ and not on $a$. We may assume that $N_b\ge 1$ for all $b$, otherwise we can restrict the morphism to a smaller alphabet. Let $a,b\in\mathcal{A}$ and $r\in\eint{1}{N_a}$. We let $C_{r,a,b}$ be the position of the $r$th symbol $a$ occurring in $f(b)$. Note that $C_{r,a,b}$ is a constant in $\eint{1}{\ell}$ which is also derived from $f$.

We define a predicate $g_{w,a}(m,j)$ which holds if and only if the position of the $m$th symbol $a$ occurring in $w$ is $j$,
$$g_{w,a}(m,j)\equiv (\exists q)(\exists r)( m=q\cdot N_a+r \wedge 0<r\le N_a \wedge \bigvee_{b\in\mathcal{A}}(\chi_{w,b}(q+1)\wedge j=q\cdot \ell+C_{r,a,b})).$$
Indeed, $w$ is the (infinite) concatenation of blocks of length $\ell$ and each such block contains exactly $N_a$ letters $a$. If $m=q\, N_a+r$ with $0<r\le N_a$, then the $m$th occurrence of $a$ appears in the $(q+1)$st block of length $\ell$. Since $f(w)=w$, this block is equal to $f(w_{q+1})$. The disjunction expresses the fact that $w_{q+1}=b$ for some $b$, and thus the $m$th letter $a$ occurs in position $q\, \ell+C_{r,a,b}$.
\end{proof}

\begin{example}
    Consider the morphism $f:1\mapsto 112, 2\mapsto 121$. A prefix of the fixed point $w$ of $f$ is
$$112 112 121  112 112 121  112 121 112\cdots .$$
The first few elements in $P_w$ are $(0,0), (1,3), (2,6), (4,8), (5,12), (7,15), (9,17),\ldots$. With the notation of the previous proof, $g_{w,1}(6,9)$ and $g_{w,2}(6,17)$ hold. We have $\ell=3$, $N_1=2$ and $N_2=1$. Thus $6=2 N_1+2$. This means that the sixth $1$ occurs in the third block of length $3$ which is $f(w_3)=f(2)=121$, and $C_{1,1,2}=1$ and $C_{2,1,2}=3$. Hence, the position of the sixth $1$ is $2\cdot L+C_{2,1,2}=9$.
\end{example}

\subsection{Recurrence of order $2$}

In \cite{Fra:98} Fraenkel considered a class of games whose $P$-positions can be characterized using a numeration system $(U_i)_{i\ge 0}$ satisfying, for $i\ge 1$, the relation
\begin{equation}
    \label{eq:fraenkel}
    U_{i}=(s+t-1)\, U_{i-1}+s\, U_{i-2}
\end{equation}
 where $U_{-1}=1/s$, $U_0=1$ and $s,t\in\mathbb{N}$ are positive.  The following proposition is easy to prove. It implies that the corresponding numeration system is a Pisot system.

\begin{proposition}
Let $s,t>0$ be integers.
    The positive root of $X^2-(s+t-1)X-s$ is a Pisot number. In particular, the numeration system defined by \eqref{eq:fraenkel} is a Pisot numeration system.
\end{proposition}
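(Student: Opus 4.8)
The plan is to show two things: first, that $\alpha=\frac{(s+t-1)+\sqrt{(s+t-1)^2+4s}}{2}$, the positive root of $X^2-(s+t-1)X-s$, is a Pisot number; second, that the sequence defined by \eqref{eq:fraenkel} with $U_0=1$ (and the auxiliary $U_{-1}=1/s$) is indeed a Pisot numeration system in the sense of the earlier definition, i.e.\ its characteristic polynomial is the minimal polynomial of this Pisot number and the initial terms give an increasing integer sequence. The Pisot property is the heart of the matter, and it reduces to a completely elementary analysis of a quadratic.

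First I would verify that $P(X)=X^2-(s+t-1)X-s$ is monic with integer coefficients, which is immediate since $s,t\in\mathbb{N}$. Its two roots are $\alpha_\pm=\tfrac12\bigl((s+t-1)\pm\sqrt{(s+t-1)^2+4s}\bigr)$. I would check that $\alpha=\alpha_+>1$: since $P(1)=1-(s+t-1)-s=1-2s-t+1-(-1)$, a direct computation gives $P(1)=2-2s-t-(s+t-1)\cdot 0$, so I would simply evaluate $P(1)=1-(s+t-1)-s=2-2s-t\le 0$ for $s\ge1$, while $P$ is eventually positive, placing $\alpha_+>1$. The key Pisot estimate is that the conjugate root $\alpha_-$ satisfies $|\alpha_-|<1$. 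Since $\alpha_+\alpha_-=-s$ (product of roots) and $\alpha_+\alpha_-<0$, the second root $\alpha_-$ is negative, and $|\alpha_-|=s/\alpha_+$. Thus it suffices to prove $\alpha_+>s$, equivalently $P(s)<0$; computing $P(s)=s^2-(s+t-1)s-s=s^2-s^2-(t-1)s-s=-ts<0$ for $s,t>0$. This gives $\alpha_+>s$, hence $|\alpha_-|=s/\alpha_+<1$, which is exactly the Pisot condition.

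Having established that $\alpha$ is a Pisot number, I would confirm that $P$ is its minimal polynomial: $P$ is irreducible over $\mathbb{Q}$ precisely when its discriminant $(s+t-1)^2+4s$ is not a perfect square, and when it does factor, one should check the degenerate cases separately, but in all cases the minimal polynomial of the algebraic integer $\alpha$ has $P$ as a multiple and $\alpha$ remains Pisot. Finally I would observe that the sequence $(U_i)_{i\ge0}$ from \eqref{eq:fraenkel} with $U_0=1$ satisfies the recurrence with characteristic polynomial $P$, so by the definition of a Pisot numeration system given earlier, it is one; the monotonicity $U_{i+1}>U_i$ follows since the coefficients $s+t-1\ge1$ and $s\ge1$ are nonnegative and the leading root exceeds $1$.

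The main obstacle I anticipate is purely bookkeeping at the boundary values of the parameters: when $s=t=1$ one has $P(X)=X^2-X-1$, recovering the golden ratio, and the inequality $P(1)=2-2s-t\le0$ must be checked to be strict or handled on its boundary so that $\alpha>1$ genuinely holds; similarly the irreducibility argument needs the caveat that a perfect-square discriminant could in principle make $P$ reducible, in which case $\alpha$ would be rational, contradicting the strict inequalities just derived, so this case does not actually occur. Apart from these edge-case verifications, the proof is a short self-contained computation and the proposition is, as stated, easy.
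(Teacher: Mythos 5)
The paper offers no proof of this proposition at all --- it is simply declared ``easy to prove'' --- so there is no argument of the authors to compare yours against; judged on its own merits, your proof is correct, and it is the natural one. The two evaluations $P(1)=2-2s-t\le -1<0$ and $P(s)=-ts<0$ do all the work: the first places the positive root $\alpha_+$ strictly above $1$ (so your worry about strictness at $s=t=1$ is unfounded, since $P(1)\le -1$ for all $s,t\ge 1$), and the second, combined with $\alpha_+\alpha_-=-s<0$, gives $\alpha_-<0$ and $|\alpha_-|=s/\alpha_+<1$, which is the Pisot condition. Two points deserve tightening. First, the paper's definition of a Pisot numeration system requires the characteristic polynomial to be the \emph{minimal} polynomial of a Pisot number, so irreducibility of $P$ over $\mathbb{Q}$ is genuinely needed for the ``in particular'' clause; your fallback remark that $\alpha$ would ``remain Pisot'' even if $P$ factored would not rescue that clause. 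The clean way to exclude reducibility: a monic integer quadratic that factors over $\mathbb{Q}$ has integer roots, but you have shown $\alpha_-\in(-1,0)$, and no integer lies in that interval (note $\alpha_-\neq 0$ since $P(0)=-s\neq 0$); your phrasing ``$\alpha$ would be rational, contradicting the strict inequalities just derived'' is not quite the right contradiction, since rationality of $\alpha_+$ alone contradicts nothing. Second, the intermediate arithmetic in your computation of $P(1)$ is garbled (the strings $1-2s-t+1-(-1)$ and $2-2s-t-(s+t-1)\cdot 0$ are noise, though the final value $2-2s-t$ is right), and for the numeration system itself you should record that $U_1=(s+t-1)U_0+sU_{-1}=s+t\in\mathbb{N}$, so the fractional seed $U_{-1}=1/s$ never produces a non-integer term, while $U_i\ge U_{i-1}+U_{i-2}>U_{i-1}$ for $i\ge 2$ gives the required monotonicity.
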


The take-away game devised by Fraenkel in \cite[Theorem~5.1]{Fra:98} has the following property: a pair $(x,y)$ is a $\mathcal{P}$-position if and only if
\begin{itemize}
  \item $\rep_U(x)$ ends in an even (possibly $0$) number of zeroes,
  \item $\rep_U(y)=\rep_U(x)0$.
\end{itemize}
This syntactical property permits us to state the following result.

\begin{proposition}\label{pro:st}
    For the take-away game in \cite{Fra:98}, the set of $\mathcal{P}$-positions is $U$-recognizable and thus, one can decide whether this set is $1$-invariant.
\end{proposition}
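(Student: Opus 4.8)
The plan is to reduce everything to Theorem~\ref{thm:main}: once we know that the set $P$ of $\mathcal{P}$-positions of Fraenkel's game is $U$-recognizable, decidability of its $1$-invariance is immediate. So the whole task is to exhibit a finite automaton (equivalently, by Theorem~\ref{the:equiv}, a formula of $\langle\mathbb{N},+,V_U\rangle$) recognizing the language $\rep_U(P)=\{\rep_U(x,y)\mid (x,y)\in P\}$ of padded pairs of digits. I would read all inputs most significant digit first, following our DFA conventions, and treat the two defining conditions separately, invoking closure of regular languages under intersection at the end.

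First I would handle the shift condition $\rep_U(y)=\rep_U(x)0$. Writing $\rep_U(x)=c_\ell\cdots c_0$, after left-padding $x$ to the common length the two synchronized rows become $0\,c_\ell\cdots c_0$ for $x$ and $c_\ell\cdots c_0\,0$ for $y$. Hence, reading from the left, the bottom ($y$) digit at each step coincides with the top ($x$) digit of the next step, the very first top digit is $0$, and the very last bottom digit is $0$. This is a one-step delay relation, recognized by a DFA whose states simply record the last bottom digit read (the ``expected next top digit''): on reading a pair it checks that its top digit matches the stored value and then stores its bottom digit, starting from and accepting in the state expecting $0$. As the digit-set is finite, this automaton has finitely many states.

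Next I would capture the parity condition ``$\rep_U(x)$ ends in an even number of zeroes''. Left-padding adds only leading zeroes, so it does not affect the count of trailing zeroes; a two-state automaton reading the top component and tracking the parity of the current run of low-order zeroes (flipping on a $0$, resetting to even on a nonzero digit) recognizes it, accepting exactly when the parity is even. Equivalently, one may note that $V_U(x)=U_j$ with $j$ the index of the least significant nonzero digit, and that the set $\{U_{2j}\mid j\ge 0\}$ is $U$-recognizable (its expansions are a $1$ followed by an even number of zeroes), so the condition is $U$-definable through the term $V_U(x)$. Intersecting the two regular languages (and, if one insists on a strict convention, with the regular set of valid $U$-expansions, which exists since $U$ is Pisot) yields $\rep_U(P)$, so $P$ is $U$-recognizable and Theorem~\ref{thm:main} applies.

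The only delicate points are bookkeeping rather than substance: one must fix conventions for the boundary digits and for the degenerate position $(0,0)$ (empty expansion, zero trailing zeroes, which is even), and verify that the leading zeroes introduced by padding are harmless for both conditions and that appending a trailing $0$ to a valid greedy expansion keeps it valid. I expect this boundary bookkeeping, not any conceptual difficulty, to be the main thing to get right.
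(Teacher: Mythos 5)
Your proposal matches the paper's proof essentially step for step: the paper constructs exactly the same one-step-delay DFA for pairs of the form $(0u,u0)$ (states storing the last digit read on the second component, checking it against the next digit of the first), intersects it with a DFA accepting the $U$-expansions ending in an even number of zeroes (obtained from the automaton for $\rep_U(\mathbb{N})$, which exists since $U$ is Pisot), and concludes by the main decidability theorem. The one detail the paper includes that you omit is the symmetric case $(u0,0u)$: since the set of $\mathcal{P}$-positions contains both orderings of each pair, one must also take the union with the component-swapped language, which is equally regular, so this is a trivial closure step rather than a real gap.
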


\begin{proof}
    Since we have a Pisot numeration system, the set $\mathbb{N}$ is $U$-recognizable \cite{BH}. Otherwise stated, there exists a DFA accepting $\rep_U(\mathbb{N})$. One can obviously modify this automaton to accept exactly the $U$-expansions ending with an even number of zeroes.

We can also construct a DFA recognizing pairs of words of the form $(0u,u0)$ where $u$ is a word over the alphabet $A$. The set of states is $A$. The initial state is $0$. This state is also final. States are used to store the last letter that was read on the second component.
Reading $(a,b)\in A^2$ from state $a$ leads to state $b$. Reading $(a,b)$ from state $c\neq a$ leads to a dead state. Such a DFA for $A=\{0,1\}$ is depicted in Figure~\ref{fig2}.

From these two automata, one can derive a DFA recognizing pairs of words of the form $(0u,u0)$ where $u$ is a $U$-expansion ending with an even number of zeroes. To finish the proof, one also has to consider the symmetric version for words of the form $(u0,0u)$.
\end{proof}

\begin{remark}
    For $s=t=1$, we are back to the Fibonacci sequence and Wythoff's game. The DFA accepting pairs of the forms $(0u,u0)$ where $u$ is a $F$-expansion ending with an even number of zeroes is depicted in Figure~\ref{fig:wyt}.
\begin{figure}[htbp]
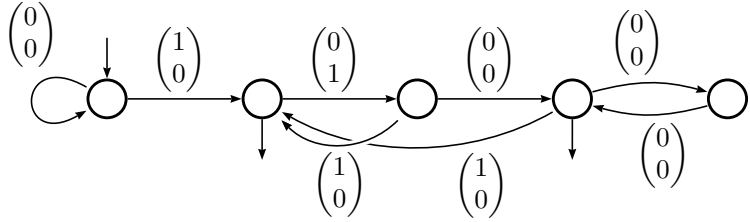

        \centering
\VCDraw{%
        \begin{VCPicture}{(-2,-2.5)(12,2)}
 \StateVar[]{(-1.7,0)}{1}
 \StateVar[]{(1.7,0)}{2}
 \StateVar[]{(5.1,0)}{3}
 \StateVar[]{(8.5,0)}{4}
 \StateVar[]{(11.9,0)}{5}
\Initial[n]{1}
\Final[s]{2}
\Final[s]{4}
\LoopW{1}{
  \begin{pmatrix}
      0\\ 0\\
  \end{pmatrix}
}

\EdgeL{1}{2}{
  \begin{pmatrix}
      1\\ 0\\
  \end{pmatrix}
}

\EdgeL{2}{3}{
  \begin{pmatrix}
      0\\ 1\\
  \end{pmatrix}
}

\EdgeL{3}{4}{
  \begin{pmatrix}
      0\\ 0\\
  \end{pmatrix}
}
\ArcL{4}{5}{
  \begin{pmatrix}
      0\\ 0\\
  \end{pmatrix}
}
\ArcL{5}{4}{
  \begin{pmatrix}
      0\\ 0\\
  \end{pmatrix}
}
\VArcL[.3]{arcangle=30,ncurv=.7}{4}{2}{
  \begin{pmatrix}
      1\\ 0\\
  \end{pmatrix}
}
\EdgeBorder
\VArcL[.5]{arcangle=45,ncurv=.8}{3}{2}{
  \begin{pmatrix}
      1\\ 0\\
  \end{pmatrix}
}
\EdgeBorderOff

\end{VCPicture}
}
\caption{The set of $\mathcal{P}$-positions of Wythoff's game is $F$-recognizable.}
        \label{fig:wyt}
    \end{figure}
\end{remark}

\subsection{Sets derived from $\mathcal{P}$-positions of existing games}

Here we will observe that many games like Wythoff's game, Tribonacci and Pisot unit game, Raleigh game, etc. have a $U$-recognizable set of $\mathcal{P}$-positions, and thus Theorem~\ref{the:main} can be applied.

$\bullet$ For Wythoff's game, with the strong syntactical properties of the $F$-representations of the $\mathcal{P}$-positions, we may apply Proposition~\ref{pro:st} and get the DFA depicted in Figure~\ref{fig:wyt}.

$\bullet$ A similar argument may be developed for the Tribonacci game because of the next statement. Note that this game has a set of $\mathcal{P}$-positions coded by $P_T$ where $T$ is the word given in \eqref{eq:tw}.

\begin{theorem}\cite[Theorem~5.2]{tribo}\label{the:tr}
The triple $(a,b,c)$ with $a<b<c$ is a $\mathcal{P}$-position of the Tribonacci game if and only if $\rep_T(a-1)=w0$, $\rep_T(b-1)=w01$ and $\rep_T(c-1)=w011$, where $\rep_T$ is the $T$-expansion associated with the Tribonacci sequence given in Example~\ref{exa:tribo}.
\end{theorem}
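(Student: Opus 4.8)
The plan is to reduce the statement to a purely combinatorial fact about the Tribonacci word $T$ and its $T$-expansions, and then to establish that fact by exploiting the self-similarity $f(T)=T$ of the fixed point of $f:1\mapsto 12,\,2\mapsto 13,\,3\mapsto 1$. Recall from the introduction that $\mathcal{P}(G_{\mathrm{TRIBO}})=P_T$, so that for $m\ge 1$ the $m$th $\mathcal{P}$-position is $(A_m,B_m,C_m)$, where $A_m$ (resp.\ $B_m$, $C_m$) is the index of the $m$th occurrence of the letter $1$ (resp.\ $2$, $3$) in $T$. Since $\rep_T$ is a bijection between $\mathbb{N}$ and the admissible words (those over $\{0,1\}$ avoiding the factor $111$), and since every nonempty admissible word ends in exactly one of the suffixes $0$, $01$, $011$, it suffices to prove: for every $m\ge 1$, writing $w=\rep_T(m-1)$, one has $\rep_T(A_m-1)=w0$, $\rep_T(B_m-1)=w01$ and $\rep_T(C_m-1)=w011$. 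The ordering $a<b<c$ and the two directions of the equivalence then follow formally, because the three suffix classes partition the nonempty admissible words and $\mathrm{val}(w0)<\mathrm{val}(w01)<\mathrm{val}(w011)$.

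First I would record the structural (desubstitution) relations. Writing $t_1t_2\cdots=T$ for the letters and $W_n=f^n(1)$, an easy induction gives $|W_n|=T_n$ together with the word recurrence $W_{n+3}=W_{n+2}W_{n+1}W_n$, which mirrors $T_{n+3}=T_{n+2}+T_{n+1}+T_n$. Each of $f(1)=12$, $f(2)=13$, $f(3)=1$ begins with $1$, and within a block the only non-initial letters are $2$ and $3$; hence the $1$'s of $T$ are exactly the block-initial letters, a letter $2$ is produced only by $f(1)$, and a letter $3$ only by $f(2)$. Setting $L_k=|f(t_1\cdots t_k)|$ for the length of the image of the length-$k$ prefix, the blocks $f(t_1),f(t_2),\dots$ tile $T=f(T)$, so the $k$th letter $1$ starts the $k$th block. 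This yields $A_m=L_{m-1}+1$; tracking the second letters of the blocks $f(1)$ and $f(2)$ gives likewise $B_m=A_{A_m}+1$ and $C_m=A_{B_m}+1$.

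The hard part will be the numeration identity $L_n=E(n)$ for all $n$, where $E$ is the ``append a zero'' map $E(x)=\mathrm{val}\bigl(\rep_T(x)\,0\bigr)$. Granting this, the three target identities follow by routine arithmetic in the Tribonacci system: from $A_m-1=L_{m-1}=E(m-1)$ we get $\rep_T(A_m-1)=w0$; then $B_m-1=A_{A_m}=E(A_m-1)+1=\mathrm{val}(w00)+1=\mathrm{val}(w01)$ and $C_m-1=A_{B_m}=E(B_m-1)+1=\mathrm{val}(w010)+1=\mathrm{val}(w011)$, using only that appending a $0$ preserves admissibility and that $\mathrm{val}(u0)+1=\mathrm{val}(u1)$ whenever $u1$ is admissible. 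To prove $L_n=E(n)$ I would invoke the Dumont--Thomas factorization attached to $f$: the greedy expansion $n=\sum_i d_i T_i$ induces a factorization of the length-$n$ prefix of $T$ into blocks $W_i$, after which applying $f$ replaces each block $W_i$ by $f(W_i)=W_{i+1}$ of length $T_{i+1}$, so that $L_n=\sum_i d_i T_{i+1}=E(n)$. The base fact $f(W_k)=W_{k+1}$ and the word recurrence $W_{n+3}=W_{n+2}W_{n+1}W_n$ drive the induction.

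The only genuinely delicate point is the last one: showing that the admissibility condition of the greedy $T$-expansion (avoidance of $111$) is exactly what makes the prefix factorization into $W_i$-blocks legitimate and unambiguous. Everything else is either an application of the introduction's identification $\mathcal{P}(G_{\mathrm{TRIBO}})=P_T$, a short induction on the morphism, or routine digit manipulation in the Tribonacci numeration. Once $L_n=E(n)$ is secured, combining it with the structural relations of the second paragraph delivers the three representations $w0$, $w01$, $w011$ simultaneously for the same prefix $w=\rep_T(m-1)$, which is precisely the claimed characterization.
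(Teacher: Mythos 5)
Your proposal cannot be checked against an internal proof, because the paper does not contain one: the statement is imported verbatim from \cite{tribo} (Theorem~5.2 there), and the paper never even defines the Tribonacci game's ruleset, only the coding of its $\mathcal{P}$-positions by the word $T$. Judged on its own merits, your plan is correct, and your route --- taking $\mathcal{P}(G_{\mathrm{TRIBO}})=P_T$ from the introduction as given and proving the purely combinatorial equivalence between the morphic coding and the numeration characterization --- is in fact the only route available from inside this paper. Your three ingredients are all sound: since each block $f(1)=12$, $f(2)=13$, $f(3)=1$ begins with $1$ and contains no other $1$, the $1$'s of $T=f(T)$ are exactly the block-initial positions, giving $A_m=L_{m-1}+1$ with $L_k=|f(t_1\cdots t_k)|$; each $2$ (resp.\ $3$) is the second letter of the image block of a $1$ (resp.\ of a $2$), giving $B_m=A_{A_m}+1$ and $C_m=A_{B_m}+1$; and the identity $L_n=\mathrm{val}\bigl(\rep_T(n)\,0\bigr)$ is the Dumont--Thomas fact, provable exactly as you sketch from the greedy prefix factorization into the blocks $W_i=f^i(1)$, using $W_{i+3}=W_{i+2}W_{i+1}W_i$, the fact that $W_{\ell-1}W_{\ell-2}$ is itself a prefix of $T$, and $f(W_i)=W_{i+1}$. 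The concluding digit manipulations ($\mathrm{val}(u0)+1=\mathrm{val}(u1)$ when $u1$ avoids $111$, admissibility of $w0,w01,w011$) and the partition of admissible words into the three suffix classes then yield both directions of the equivalence. Two caveats you should make explicit in a write-up: (i) what you actually prove is that the two descriptions of $P_T$ coincide; the genuinely game-theoretic content --- that the game's $\mathcal{P}$-positions are $P_T$ at all --- is taken from \cite{tribo} and cannot be re-derived here, since the moves of the game are absent from this paper; (ii) at $m=1$ your identities hold only up to stripping leading zeroes ($w0=0$ as a string, while $\rep_T(0)=\varepsilon$), a convention the paper's own Table~\ref{tab:extri} also uses silently.
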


\begin{example} Table~\ref{tab:extri} illustrates the previous result. $A_m$ (respectively $B_m$ and $C_m$) denotes the position of the $m$th $1$ (resp. $2$ and $3$) in the Tribonacci word.
    \begin{table}[htbp]
$$\begin{array}{|ccc|rrr|}
\hline
A_m & B_m & C_m & \rep_T(A_m-1) &\rep_T(B_m-1) &\rep_T(C_m-1) \\
\hline
    1 & 2 &  4 &  \varepsilon & 1 & 11 \\
    3 & 6 & 11 &  10 & 101 & 1011 \\
5&9&17 & 100& 1001 & 10011 \\
7&13&24& 110& 1101 & 11011 \\
\hline
\end{array}$$
        \caption{First $\mathcal{P}$-positions and the corresponding Tribonacci-expansions.}
        \label{tab:extri}
    \end{table}
\end{example}

\begin{corollary}\label{cor:tri}
    It is decidable whether the set of $\mathcal{P}$-positions of the Tribonacci game \cite{tribo} is $1$-invariant.
\end{corollary}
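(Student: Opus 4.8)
The plan is to show that the set $\mathcal{P}(G_{\mathrm{TRIBO}})$ is $T$-recognizable (equivalently, $T$-definable) and then invoke Theorem~\ref{thm:main} (or Theorem~\ref{the:main}) to conclude decidability of its $1$-invariance. The key ingredient is the syntactic characterization provided by Theorem~\ref{the:tr}: a triple $(a,b,c)$ with $a<b<c$ is a $\mathcal{P}$-position precisely when $\rep_T(a-1)=w0$, $\rep_T(b-1)=w01$ and $\rep_T(c-1)=w011$ for a common word $w$. This is entirely analogous to the situation for Wythoff's game handled in Proposition~\ref{pro:st}, where the strong syntactic constraints on $F$-representations were directly translated into a finite automaton reading the components simultaneously.

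First I would build a DFA over the alphabet $\{0,1\}^3$ that reads the aligned (left-padded) $T$-expansions of the triple $(a-1,b-1,c-1)$ and checks that the three components have the prescribed shapes $w0$, $w01$, $w011$ relative to a common prefix $w$. Since the three words differ only by the fixed suffixes $0$, $01$, $011$, they have lengths $|w|+1$, $|w|+2$, $|w|+3$; after padding to a common length, the automaton can verify digit by digit that the first and second components agree with the third on the shared prefix $w$ and carry the correct trailing digits. The automaton must also accept only valid $T$-expansions, i.e.\ words over $\{0,1\}$ avoiding the factor $111$ (as noted in Example~\ref{exa:tribo}); this is a standard regular constraint that can be intersected in. As the class of $U$-recognizable sets is closed under intersection, projection, and the affine shift $(a,b,c)\mapsto(a-1,b-1,c-1)$ (which is $T$-definable by Remark~\ref{rem:var} and Theorem~\ref{the:equiv}), the set of ordered triples $(a,b,c)$ with $a<b<c$ that are $\mathcal{P}$-positions is $T$-recognizable.

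Next I would account for the two adjustments built into Definition~\ref{def:pw}: the set $P_T$ consists of these triples \emph{together with all their coordinate permutations}, and it also includes $\mathbf{0}$. Both operations preserve $U$-recognizability: permuting coordinates corresponds to permuting the components read by the automaton (a finite union over the symmetric group $S_3$), and adjoining the single point $\mathbf{0}$ is a union with a finite recognizable set. Hence $P_T=\mathcal{P}(G_{\mathrm{TRIBO}})$ is $T$-recognizable, and contains $\mathbf{0}$. By Theorem~\ref{the:equiv} it is $T$-definable, so Theorem~\ref{the:main} applies and it is decidable whether $P_T$ is $1$-invariant.

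The main obstacle I anticipate is purely bookkeeping rather than conceptual: correctly synchronizing the three components once they are left-padded to equal length, so that the automaton aligns the common prefix $w$ across components whose unpadded suffixes have different lengths. The cleanest route, which avoids hand-crafting the automaton, is to phrase the conditions of Theorem~\ref{the:tr} directly as a first-order formula in $\langle\mathbb{N},+,V_T\rangle$ and rely on Theorem~\ref{the:equiv}; the existence of a common $w$ and the fixed suffix patterns are all expressible using the predicates available in the logic, after which decidability is immediate from Theorem~\ref{the:main}.
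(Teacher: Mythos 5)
Your proposal is correct and follows essentially the same route as the paper: use the syntactic characterization of Theorem~\ref{the:tr} to build a DFA recognizing the aligned triples of shape $(w0,w01,w011)$, intersect with valid $T$-expansions, handle the shift by $1$ via the definability of the successor (Remark~\ref{rem:var}, Theorem~\ref{the:equiv}, or the recognizability of $\{(i,i+1)\mid i\ge 0\}$), and conclude by Theorem~\ref{the:main}. Your extra remarks on closing under coordinate permutations and adjoining $\mathbf{0}$ are sound bookkeeping that the paper leaves implicit.
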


\begin{proof}
    One can adapt the proof of Proposition~\ref{pro:st} showing that the set of $\mathcal{P}$-positions is $T$-recognizable. One has to intersect the DFA in Figure~\ref{figtri} with a DFA recognizing valid $T$-expansions for the first component. Note that $\rep_U(\mathbb{N})$ is the set of words over $\{0,1\}$ not containing the factor $111$. Also, for every Pisot numeration system $U$, the set $\{(i,i+1)\mid i\ge 0\}$ is $U$-recognizable \cite{Frougny,Frougny2}. Thus the difference of $1$ that appears in Theorem~\ref{the:tr} can easily be handled by finite automata. An alternative explanation is to make the following observation. Let $X$ be a $U$-recognizable subset of $\mathbb{N}^3$. By Theorem~\ref{the:equiv}, it is equivalent to the fact that $X$ is $U$-definable by a formula $\varphi$. Now, the set $X'$ defined by $(a,b,c)\in X\Leftrightarrow  (a+1,b+1,c+1)\in X'$ is trivially $U$-definable by a formula $\varphi'$ because the successor is definable in $\langle\mathbb{N},+\rangle$; see Remark~\ref{rem:var}. Indeed $\varphi'(a',b',c')$ holds if and only if there exist $a,b,c$ such that $\varphi(a,b,c)$ and $a'=a+1$, $b'=b+1$, $c'=c+1$.
\end{proof}

\begin{example}
In the DFA depicted in Figure~\ref{figtri}, reading $(a,b,c)$ from state $ab$ leads to state $bc$. States are used to store the last symbol read on the second and third component. We have duplicated the state $11$ to take into account that on the third component, we must accept words ending with $011$ and not those ending with $111$. Note that this DFA does not test the occurrence of a factor $111$.
\begin{figure}[htbp]
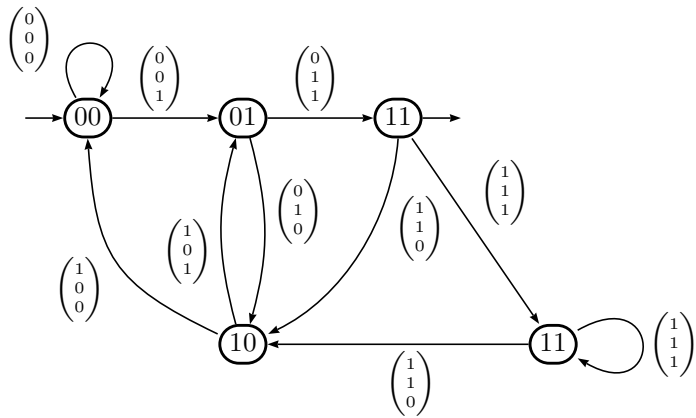

        \centering
\VCDraw{%
        \begin{VCPicture}{(-2,-6.5)(6,2)}
 \StateVar[00]{(-1.7,0)}{1}
 \StateVar[01]{(1.7,0)}{2}
 \StateVar[11]{(5.1,0)}{3}
 \StateVar[11]{(8.5,-5)}{4}
 \StateVar[10]{(1.7,-5)}{5}
\Initial[w]{1}
\Final[e]{3}
\LoopN{1}{
  {\tiny \begin{pmatrix}
      0\\0\\0\\
  \end{pmatrix}}
}
\LoopE[.5]{4}{
  {\tiny \begin{pmatrix}
      1\\1\\1\\
  \end{pmatrix}}
}
\EdgeL{1}{2}{
  {\tiny \begin{pmatrix}
      0\\0\\1\\
  \end{pmatrix}}
}
\EdgeL{2}{3}{
  {\tiny \begin{pmatrix}
      0\\1\\1\\
  \end{pmatrix}}
}
\EdgeL{3}{4}{
  {\tiny \begin{pmatrix}
      1\\1\\1\\
  \end{pmatrix}}
}
\VArcL[.5]{arcangle=30,ncurv=1}{5}{1}{
  {\tiny \begin{pmatrix}
      1\\0\\0\\
  \end{pmatrix}}
}
\ArcL{2}{5}{
  {\tiny \begin{pmatrix}
      0\\1\\0\\
  \end{pmatrix}}
}
\VArcL[.3]{arcangle=30,ncurv=.7}{3}{5}{
  {\tiny \begin{pmatrix}
      1\\1\\0\\
  \end{pmatrix}}
}
\EdgeL{4}{5}{
  {\tiny \begin{pmatrix}
      1\\1\\0\\
  \end{pmatrix}}
}
\ArcL{5}{2}{
  {\tiny \begin{pmatrix}
      1\\0\\1\\
  \end{pmatrix}}
}
\end{VCPicture}
}
\caption{A DFA recognizing $(00w0,0w01,w011)$.}
        \label{figtri}
    \end{figure}
\end{example}

$\bullet$ In \cite{pisot} the considered games have a set of $\mathcal{P}$-positions coded by generalized Tribonacci words that are a fixed point of the morphism $1\mapsto 1^s2, 2\mapsto 13, 3\mapsto 1$ where $s\ge 1$. We let $G_s$ denote this game (for a fixed value of the parameter $s$).
In this setting, one considers the sequence $(U_i)_{i\ge 0}$ where
\begin{equation}
    \label{eq:gen}
U_{i+3}=sU_{i+2}+U_{i+1}+U_i
\end{equation}
with the initial conditions $U_0=1$, $U_1=sU_0+1$, $U_2=sU_1+U_0+1$. The language of all $U$-expansions is recognized by the DFA depicted in Figure~\ref{fig:pisotg}

\begin{figure}[htbp]
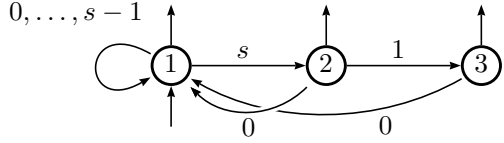

        \centering
\VCDraw{%
        \begin{VCPicture}{(-2,-2.5)(12,2)}

 \StateVar[1]{(1.7,0)}{2}
 \StateVar[2]{(5.1,0)}{3}
 \StateVar[3]{(8.5,0)}{4}

\Initial[s]{2}
\Final[n]{2}
\Final[n]{3}
\Final[n]{4}
\LoopW{2}{0,\ldots,s-1}
\EdgeL{2}{3}{s}
\EdgeL{3}{4}{1}
\VArcL[.3]{arcangle=30,ncurv=.7}{4}{2}{0}
\EdgeBorder
\VArcL[.5]{arcangle=45,ncurv=.8}{3}{2}{0}
\EdgeBorderOff

\end{VCPicture}
}
\caption{A DFA accepting all the $U$-representations.}
        \label{fig:pisotg}
    \end{figure}

The following theorem is a natural generalization of Theorem~\ref{the:tr}.
\begin{theorem}\cite[Section~4]{pisot}\label{the:pisot}
Let $s\ge 1$. The triple $(a,b,c)$ with $a<b<c$ is a $\mathcal{P}$-position of the game $G_s$ if and only if $\rep_U(a-1)=w$, $\rep_U(b-1)=ws$, $\rep_U(c-1)=ws1$ and $w$ is the label of a cycle starting from the initial state $1$ in the DFA depicted in Figure~\ref{fig:pisotg}, where $\rep_U$ is the $U$-expansion associated with the sequence \eqref{eq:gen}.
\end{theorem}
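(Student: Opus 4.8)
The plan is to reduce the statement to a description of the positions of the $m$th occurrences of the letters $1,2,3$ in the fixed point $\mathbf{t}=f^\omega(1)$ of $f\colon 1\mapsto 1^s2,\ 2\mapsto 13,\ 3\mapsto 1$, and to read those positions off the $U$-expansions. Since, by \cite{pisot}, the set of $\mathcal{P}$-positions of $G_s$ is exactly $P_{\mathbf{t}}$, it suffices to characterize the triples of $P_{\mathbf{t}}$. First I would record the self-similar decomposition $f^i(1)=\bigl(f^{i-1}(1)\bigr)^s\,f^{i-2}(1)\,f^{i-3}(1)$, which follows from applying $f$ to $f^{i-1}(1)=f^{i-2}(f(1))$; taking lengths yields $|f^i(1)|=s\,|f^{i-1}(1)|+|f^{i-2}(1)|+|f^{i-3}(1)|$, hence $|f^i(1)|=U_i$ by the recurrence \eqref{eq:gen} and the initial conditions. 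This block decomposition is mirrored exactly by the automaton of Figure~\ref{fig:pisotg}, whose three states are the three letters: reading a digit $d$ from the state $x$ moves to the $(d+1)$st letter of $f(x)$ (digits $0,\dots,s-1$ and $s$ from state $1$ scan the letters of $f(1)=1^s2$, digits $0,1$ from state $2$ scan $f(2)=13$, and digit $0$ from state $3$ scans $f(3)=1$). It follows, by the standard Dumont--Thomas reading, that for every $n\ge 1$ the letter $\mathbf{t}_n$ equals the state reached on input $\rep_U(n-1)$ started at state $1$; thus $\mathbf{t}_n=1$ iff $\rep_U(n-1)=u$ is the label of a cycle at state $1$, while $\mathbf{t}_n=2$ (resp.\ $3$) iff the run ends at state $2$ (resp.\ $3$).

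Next I would prove the two shift relations linking the three components of a $\mathcal{P}$-position. Since $\mathbf{t}=f(\mathbf{t})$, the letter $2$ occurs only as the final letter of an image block $f(1)=1^s2$ and the letter $3$ only as the final letter of $f(2)=13$; hence the $m$th $2$ is produced by the $m$th $1$ and the $m$th $3$ by the $m$th $2$. The crucial lemma is that desubstitution acts on $U$-expansions as the shift: writing $\langle w\rangle$ for the integer represented by a digit word $w$, the $f$-image of the prefix $\mathbf{t}_1\cdots\mathbf{t}_p$ ends at the $0$-indexed position $\langle \rep_U(p)0\rangle$. Granting this, if the $m$th $1$ sits at position $A_m$ with $\rep_U(A_m-1)=u$, then the block $1^s2$ it generates starts at the $0$-indexed position $\langle u0\rangle$, so its terminal $2$ sits at $0$-indexed position $\langle u0\rangle+s=\langle us\rangle$, giving $\rep_U(B_m-1)=us$; the same argument applied to this $2$ and its image $13$ gives $\rep_U(C_m-1)=\langle us0\rangle+1=\langle us1\rangle$. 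As $u$ is a cycle at state $1$, both $us$ and $us1$ are accepted by the automaton and are therefore the greedy $U$-expansions of $B_m-1$ and $C_m-1$, and $\langle u\rangle<\langle us\rangle<\langle us1\rangle$ forces $A_m<B_m<C_m$. Both directions of the theorem follow: from a $\mathcal{P}$-position $(a,b,c)$ with $a<b<c$ one reads $\mathbf{t}_a=1$, so $\rep_U(a-1)=u$ is a cycle at state $1$ and the shift relations force $\rep_U(b-1)=us$, $\rep_U(c-1)=us1$; conversely any cycle word $u$ produces, via these formulas, the $m$th occurrences of $1,2,3$ for the common rank $m=|\mathbf{t}_1\cdots\mathbf{t}_a|_1$, that is a triple of $P_{\mathbf{t}}$.

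The main obstacle is the desubstitution lemma, namely that sending a position to the start of its $f$-image block is the append-a-zero shift on $U$-expansions. I would prove it by induction on the length of $\rep_U(p)$ using $f^i(1)=(f^{i-1}(1))^s f^{i-2}(1)f^{i-3}(1)$: the leading greedy digit selects which of these $s+2$ sub-blocks contains the position, the identity $|f^i(1)|=U_i$ makes the block boundaries coincide with the digit weights $U_{i-1},U_{i-2},U_{i-3}$, and the inductive hypothesis handles the residual position inside the selected sub-block; the only delicate point is the bookkeeping of the three exit branches (staying at state $1$, or passing to states $2$ and $3$). As a sanity check, for $s=1$ a nonempty cycle at state $1$ must end with the digit $0$, so $u=w0$ and the formulas specialize to $\rep_T(a-1)=w0$, $\rep_T(b-1)=w01$, $\rep_T(c-1)=w011$, recovering Theorem~\ref{the:tr}. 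Finally, a purely game-theoretic alternative avoids $\mathbf{t}$ entirely: let $S$ be the set of triples described in the statement (closed under permutation and together with $\mathbf{0}$) and verify the two conditions of Proposition~\ref{pro:noyau} for the ruleset of $G_s$ --- that no move joins two elements of $S$ (stability) and that every position outside $S$ admits a move into $S$ (absorbing) --- which by uniqueness identifies $S$ with $\mathcal{P}(G_s)$ and is the direct generalization of the $s=1$ argument underlying Theorem~\ref{the:tr}.
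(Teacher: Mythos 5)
Your proof is correct, but it follows a genuinely different route from the one this paper offers. Note first that the statement is quoted from \cite[Section~4]{pisot}: the paper does not reprove it, but a closing remark of Section~\ref{sec:5} sketches a proof modelled on that of Theorem~\ref{the:ral}. That template has three steps: Proposition~\ref{pro:maes} identifies the $n$th letter of the morphic word with the state reached by the DFA of Figure~\ref{fig:pisotg} on input $\rep_U(n-1)$; the language $M=\rep_U(\mathbb{N})$ is partitioned according to suffixes (neither $s$ nor $s1$; $s$; $s1$); and prefix-closedness of $M$ guarantees that genealogically consecutive words of $M$ yield consecutive occurrences of each letter, so an induction anchored at one explicitly checked $\mathcal{P}$-position aligns the three enumerations and identifies the common rank $m$. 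You share the first ingredient (your ``Dumont--Thomas reading'' is exactly Proposition~\ref{pro:maes}, and your identification of the transitions of Figure~\ref{fig:pisotg} with the scanning of $f(1)=1^s2$, $f(2)=13$, $f(3)=1$ is accurate), but you replace the enumeration-plus-base-case step by a structural mechanism: the production relations (the $m$th $2$ is the terminal letter of the image of the $m$th $1$, the $m$th $3$ of the image of the $m$th $2$, since $2$ occurs only at the end of $f(1)$ and $3$ only at the end of $f(2)$), combined with a desubstitution lemma stating that the image of the length-$p$ prefix has length of $U$-expansion $\rep_U(p)0$, proved by induction from $f^i(1)=(f^{i-1}(1))^sf^{i-2}(1)f^{i-3}(1)$ and $|f^i(1)|=U_i$. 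The bookkeeping checks out: appending a digit is legal at the relevant states (a cycle word $u$ ends at state $1$, so $us$ and then $us1$ are accepted by the DFA and hence greedy), $B_m-1=\langle u0\rangle+s=\langle us\rangle$ and $C_m-1=\langle us0\rangle+1=\langle us1\rangle$ are correct, and your $s=1$ specialization does recover Theorem~\ref{the:tr}. What your route buys is that the suffixes $s$ and $s1$ are \emph{explained} (as terminal letters of image blocks) rather than verified, the rank alignment is automatic, and no base-case inspection is needed; what the paper's route buys is brevity, since Proposition~\ref{pro:maes} and the Raleigh proof are already in place. Two points to tighten: the desubstitution lemma is only sketched in your write-up (the induction you indicate is standard and does work, but it is the heart of the argument and should be written out), and your final game-theoretic alternative via Proposition~\ref{pro:noyau} cannot be carried out within this paper, since the explicit ruleset of $G_s$ is stated only in \cite{pisot}; it should remain a remark.
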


\begin{example}  Table~\ref{tab:gentri} illustrates the previous result. For $s=3$, the set of $\mathcal{P}$-positions of $G_3$ is coded by the word
$$111211121112131112111211121311121112111213111211112\cdots.$$
The numeration systems is constructed over the sequence $1,4,14,47,159,\ldots$.
    \begin{table}[htbp]
$$\begin{array}{|ccc|rrr|}
\hline
A_m & B_m & C_m & \rep_T(A_m-1) &\rep_T(B_m-1) &\rep_T(C_m-1) \\
\hline
1&4&14&\varepsilon&3&31\\
2&8&28&1&13&131\\
3&12&42&2&23&231\\
5&18&61&10&103&1031\\
\hline
\end{array}$$
        \caption{First $\mathcal{P}$-positions of $G_3$ and the corresponding expansions.}
        \label{tab:gentri}
    \end{table}
\end{example}

\begin{corollary}
  Let $s\ge 1$.  It is decidable whether the set of $\mathcal{P}$-positions of the game $G_s$ \cite{pisot} is $1$-invariant.
\end{corollary}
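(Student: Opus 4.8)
The plan is to mirror the proof of Corollary~\ref{cor:tri} almost verbatim, since Theorem~\ref{the:pisot} has exactly the same syntactical shape as Theorem~\ref{the:tr}: a $\mathcal{P}$-position of $G_s$ is a triple $(a,b,c)$ whose shifted $U$-expansions are of the form $(w,ws,ws1)$, with the extra constraint that $w$ labels a cycle from the initial state in the DFA of Figure~\ref{fig:pisotg}. By Theorem~\ref{the:main} (equivalently Theorem~\ref{thm:main}), it suffices to establish that the set of $\mathcal{P}$-positions of $G_s$ is $U$-recognizable, where $U$ is the Pisot numeration system defined by~\eqref{eq:gen}.

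First I would build, exactly as in Proposition~\ref{pro:st} and Corollary~\ref{cor:tri}, a DFA accepting the triples $(w,ws,ws1)$ over the digit alphabet of $U$. One stores in the states the last symbols read on the second and third components, so that reading a synchronous triple of digits checks that the third component is the second component shifted by one position and carrying a trailing $1$, and that the second component is the first shifted and carrying a trailing $s$. This is the direct analogue of the construction yielding Figure~\ref{figtri}, with the single symbol $1$ replaced by the digit $s$ on the appropriate transitions. Next I would intersect this automaton with one that tests whether the common prefix $w$ is the label of a cycle based at the initial state $1$ of the DFA in Figure~\ref{fig:pisotg}; since the set of valid $U$-expansions is regular (the numeration system is Pisot, so $\mathbb{N}$ is $U$-recognizable by~\cite{BH}), this constraint is recognizable and the intersection of regular languages is regular. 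Finally, the ordering hypothesis $a<b<c$ is automatically enforced by the prefix/suffix structure and imposes no additional difficulty.

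It remains to handle the shift by $1$ appearing in Theorem~\ref{the:pisot}, i.e.\ that the representations are those of $a-1$, $b-1$, $c-1$ rather than of $a$, $b$, $c$, and to add the permuted (symmetric) triples so as to obtain $P_w$ in the sense of Definition~\ref{def:pw}. For the shift, I would invoke precisely the argument of Corollary~\ref{cor:tri}: the relation $\{(i,i+1)\mid i\ge 0\}$ is $U$-recognizable for every Pisot system~\cite{Frougny,Frougny2}, or, more transparently, the successor is definable in $\langle\mathbb{N},+\rangle$, so if the set defined by the shifted expansions is $U$-definable then so is the set obtained by adding $1$ to each coordinate. For the permutations, one simply takes the union of the recognizable language with its images under the (finitely many) coordinate permutations, and adjoins the triple $\mathbf{0}$; finite unions preserve recognizability. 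Having shown that $\mathcal{P}(G_s)=P_w$ is $U$-recognizable, the decidability conclusion follows immediately from Theorem~\ref{thm:main}.

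I expect no essential obstacle here, as the argument is a routine adaptation of Corollary~\ref{cor:tri}. The only point requiring genuine care is the cyclic-prefix condition on $w$: one must verify that restricting $w$ to labels of cycles from the initial state in Figure~\ref{fig:pisotg} is automaton-checkable in a way that is compatible with the simultaneous reading of the three synchronized components (in particular, that the padding with leading zeroes used in the definition of $\rep_U$ of an $n$-tuple does not spoil the cycle condition). Since both constraints are regular and regular languages are closed under intersection, this is a bookkeeping matter rather than a conceptual one.
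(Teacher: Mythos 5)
Your proposal is correct and follows essentially the same route as the paper: build a synchronized DFA for triples of the form $(00w,0ws,ws1)$, observe that the cycle condition on $w$ is itself regular, handle the shift by $1$ via the $U$-recognizability of $\{(i,i+1)\mid i\ge 0\}$ (valid because the root of $X^3-sX^2-X-1$ is Pisot), and conclude by Theorem~\ref{thm:main}. The only detail worth making explicit, which the paper does state, is the verification that the positive root of $X^3-sX^2-X-1$ is indeed a Pisot number, since all the recognizability facts you invoke rest on it.
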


\begin{proof}
    Again the proof is similar to the one of Proposition~\ref{pro:st} and Corollary~\ref{cor:tri}. One has to devise a DFA for words of the form $(00u,0u3,u31)$. Moreover, testing if $u$ labels a cycle can also be handled by a DFA. Finally, the positive root of $X^3-sX^2-X-1$ is again  a Pisot number, so  the set $\{(i,i+1)\mid i\ge 0\}$ is $U$-recognizable \cite{Frougny,Frougny2}.
\end{proof}

$\bullet$ The Raleigh game \cite{Raleigh} is a variant game played on three piles of tokens. Again this game has a set of $\mathcal{P}$-positions that is $F$-recognizable. First we provide a new (morphic) characterization of its set of $\mathcal{P}$-positions.

\begin{lemma}
    The set of $P$-positions of the Raleigh game is coded by the fixed point $12312123\cdots$ of the morphism $1\mapsto 12, 2\mapsto 3,3\mapsto 12$.
\end{lemma}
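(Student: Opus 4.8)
The statement asserts that the $\mathcal{P}$-positions of the Raleigh game, which live in $\mathbb{N}^3$ (the game is played on three piles), are coded by the infinite word that is the fixed point of the morphism $1\mapsto 12,\ 2\mapsto 3,\ 3\mapsto 12$. Here "coded" has the precise meaning from Definition~\ref{def:pw}: writing $w=w_1w_2w_3\cdots$ for that fixed point, the $m$th $\mathcal{P}$-position $(A_m,B_m,C_m)$ should have $A_m$ equal to the index of the $m$th letter $1$, $B_m$ the index of the $m$th letter $2$, and $C_m$ the index of the $m$th letter $3$ (together with permutations and the tuple $\mathbf{0}$). So the task is to reconcile a combinatorial description of the Raleigh $\mathcal{P}$-positions (from \cite{Raleigh}) with this new morphic description.

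**The plan.** First I would recall the known description of the $\mathcal{P}$-positions of the Raleigh game from \cite{Raleigh}, which is presumably phrased in terms of Beatty-type sequences built from the golden ratio (since the game is $F$-recognizable) or in terms of $F$-expansions. The goal is to show that the three sequences $(A_m)$, $(B_m)$, $(C_m)$ of letter-positions read off from the fixed point $w$ coincide with those known sequences. The natural strategy is to verify that $w$ is well defined (the morphism is nonerasing and $1\mapsto 12$ is a genuine prefix rule, so iterating from $1$ converges) and that each of the three letters occurs infinitely often, which is needed for Definition~\ref{def:pw} to apply.

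**Key steps.** I would proceed as follows. (1) Establish that the fixed point $w$ exists and compute a prefix to match against the first few $\mathcal{P}$-positions of \cite{Raleigh}, confirming the coding on a base case. (2) Derive recurrence relations for the positions $A_m,B_m,C_m$ directly from the self-similar structure of $w$: because $w=f(w)$ with $f:1\mapsto 12,2\mapsto 3,3\mapsto 12$, the image of each prefix $f(w_1\cdots w_k)$ has a length that is controlled by the Parikh vector, and the positions of letters in $w$ can be expressed through the positions of the letters generating them. Concretely, I would track how a block $f(a)$ contributes letters and lengths, giving a system expressing $A_{m+1},B_{m+1},C_{m+1}$ in terms of earlier terms. (3) Show that these recurrences match the recurrences (or closed Beatty/$F$-expansion forms) satisfied by the $\mathcal{P}$-positions in \cite{Raleigh}, and conclude equality by induction on $m$.

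**Main obstacle.** The hard part will be step (2)--(3): translating the self-referential morphic structure into clean recurrences for $A_m,B_m,C_m$ and then exhibiting these as the same sequences that \cite{Raleigh} obtains by a completely different (game-theoretic or Beatty) route. The morphism is not Parikh-constant, and $1\mapsto 12$, $3\mapsto 12$ both produce a $1$ and a $2$ while $2\mapsto 3$ produces a single $3$, so the lengths $|f(a)|$ differ ($|f(1)|=|f(3)|=2$, $|f(2)|=1$) and the bookkeeping of indices does not reduce to a uniform base-$\ell$ argument. I would therefore expect to need the underlying numeration system (the Fibonacci/$F$-system, reflecting the eigenvalue of the incidence matrix) to pin down the letter positions, and the delicate point is verifying that the $F$-expansion characterization of the Raleigh $\mathcal{P}$-positions transcribes exactly into the letter-frequency data of $w$. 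Once the recurrences are aligned, the induction itself is routine.
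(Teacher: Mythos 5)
Your proposal is a plan rather than a proof: the entire content of the lemma lives in your steps (2)--(3), which you explicitly defer (``the hard part will be step (2)--(3)''), so the argument has a genuine gap exactly where the work is. The paper closes that gap with three concrete structural observations about the fixed point $w$ of $f:1\mapsto 12,\ 2\mapsto 3,\ 3\mapsto 12$, none of which appear in your sketch. First, the characterization in \cite{Raleigh} that one must match is not a Beatty-type or $F$-expansion description, as you guessed, but a recursive one: $A_m=\operatorname{mex}\{A_i,B_i,C_i : 0\le i<m\}$, $B_m=A_m+1$, and $C_m=C_{m-1}+3$ or $C_{m-1}+5$ according as $A_m-A_{m-1}$ equals $2$ or not. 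Second, the matching itself is purely combinatorial on the morphism: (i) the $m$th letter $1$ of $w$ occurs before the $m$th $2$ and the $m$th $3$, which together with the fact that the letter positions partition $\mathbb{N}_{>0}$ gives the mex property for the $1$'s; (ii) since $2$ occurs only as the second letter of $f(1)=12$ and $f(3)=12$, every $2$ immediately follows its $1$, giving $B_m=A_m+1$; (iii) each $3$ arises exactly as the image $f(2)=3$ of a $2$, so the $m$th $2$ produces the $m$th $3$, and since gaps between consecutive $2$'s are $2$ or $3$ while $|f(1)|=|f(3)|=2$ and $|f(2)|=1$, the gaps between consecutive $3$'s are $3$ or $5$ in exact correspondence with the gaps between consecutive $1$'s.

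Your anticipated obstacle is also misdiagnosed: you expect to need the Fibonacci numeration system to ``pin down the letter positions,'' but the paper's proof never invokes it --- the non-uniformity of $f$ is handled directly by the block bookkeeping in (iii), and the $F$-recognizability enters only later (Theorem~\ref{the:ral}), not in this lemma. So while your overall strategy (derive recurrences for $A_m,B_m,C_m$ from self-similarity and match them to \cite{Raleigh} by induction) is the right one and is indeed the paper's, the proposal as written supplies neither the target recurrences nor the structural facts about $f$ that make the induction go through.
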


\begin{proof}
    Let $w$ be the fixed point of the morphism $f:1\mapsto 12, 2\mapsto 3,3\mapsto 12$.
    In \cite{Raleigh}, the $\mathcal{P}$-positions of the Raleigh game ($A_m$, $B_m$ and $C_m$ are defined as before) are characterized according to the following relations: $A_1=1$, $B_1=2$, $C_1=3$ and for all $m>1$, 
    \begin{eqnarray*}
    A_m&=&mex\{A_i,B_i,C_i:0\leq i<m\}\\
    B_m&=&A_m+1\\
    C_m&=&\left\{\begin{array}{cl}
    C_{m-1}+3,&\mbox{if $A_{m}-A_{m-1}=2$};\\
    C_{m-1}+5,&\mbox{otherwise}.\end{array}\right.
    \end{eqnarray*}
    It is not hard so see that $w$ is the unique word over $\{1,2,3\}$ which exactly corresponds to the sequence $(A_m,B_m,C_m)$. Indeed, the definition of $f$ implies that for all $m\geq 1$, the $m$th $1$ of $w$ appears before the $m$th $2$ and the $m$th $3$. In other words, the $1$'s in $w$ correspond to the values $A_m$. By definition of $f$, one can also observe that the $m$th $1$ and the $m$th $2$ are always successors in $w$. Thus the $2$'s of $w$ correspond to $B_m$. Concerning the $3$'s of $w$, there are only ``produced'' as images of values $2$. In other words, the $m$th $2$ produces the $m$th $3$. Since the gap between two $2$'s is either $2$ or $3$, and as $|f(1)|=|f(3)|=2$, the difference between the $m$th and $(m+1)$st $3$ is equal to $3$ or $5$, according to the gap between the $m$th and $(m+1)$st $2$ (which is identical to the gap between the $m$th and $(m+1)$st $1$).
\end{proof}

 The DFA associated with the morphism $f:1\mapsto 12, 2\mapsto 3,3\mapsto 12$ is depicted in Figure~\ref{fig:raleigh}. It is defined as follows (and this definition can be extended to any morphism). Its set of states is $\{1,2,3\}$. If $f(a)=bc$, then the DFA has an edge from $a$ to $b$ (resp., $c$) of label $0$ (resp., $1$). If $f(a)=b$, then the DFA has an edge from $a$ to $b$ of label $0$. The initial state is $1$ because the word is obtained by iterating $f$ from $1$. Note that this DFA has a synchronizing property. Reading $00$ (resp., $01$, $010$) from every state leads to state $1$ (resp., $2$, $3$).
\begin{figure}[h!tbp]
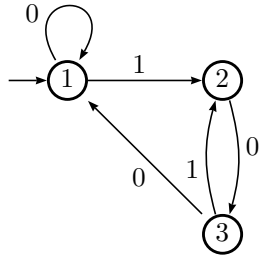

        \centering
\VCDraw{%
        \begin{VCPicture}{(-1.7,-0.5)(3.4,5)}
 \StateVar[1]{(-1.7,3.4)}{b}
 \StateVar[2]{(1.7,3.4)}{a}
 \StateVar[3]{(1.7,0)}{c}
\Initial[w]{b}
\LoopN{b}{0}
\EdgeL{b}{a}{1}
\ArcL{a}{c}{0}
\ArcL{c}{a}{1}
\EdgeL{c}{b}{0}
\end{VCPicture}
}
\caption{The DFA associated with the Raleigh morphism.}
        \label{fig:raleigh}
    \end{figure}

 We recall the following general result. See, for instance, \cite[Section~3.4]{cant}
\begin{proposition}\cite{Maes}\label{pro:maes}
    Lat $A$ be a finite alphabet. Let $f:A^*\to A^*$ be a morphism generating an infinite word $w$ when iterating $f$ on a symbol $a\in A$.
Let $M$ be the language of words not starting with $0$ and accepted by automaton associated with $f$ where $a$ is the initial state and all states are final. The $m$th symbol of $w$ (where indexing start with $0$) is the state reached from $a$ when reading the $(m+1)$st word of $M$ genealogically ordered.
\end{proposition}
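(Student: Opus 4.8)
The plan is to interpret the construction as the \emph{abstract numeration system} attached to the automaton $\mathcal{A}_f$ associated with $f$, and to show by induction on the word length that reading the words of $M$ in genealogical order recovers $w$ letter by letter. Throughout I assume, as is implicit in the statement, that $f$ is non-erasing and prolongable on $a$, so that $f(a)=a\,s_1$ with $s_1\neq\varepsilon$, each $f^n(a)$ is a prefix of $f^{n+1}(a)$, and $w=\lim_{n}f^n(a)$. Since genealogical order sorts first by length and then lexicographically, the genealogically ordered list of $M$ is obtained by concatenating, for $n=0,1,2,\dots$, the lexicographically ordered list $L_n$ of the length-$n$ words of $M$. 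Writing $g_n\in A^*$ for the word whose letters are the states reached from $a$ along the runs labelled by the successive words of $L_n$, it suffices to prove that $g_0g_1g_2\cdots=w$.

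The heart of the argument is the recursion $g_0=a$, $g_1=s_1$, and $g_{n+1}=f(g_n)$ for $n\ge1$. First I would observe that, since every state of $\mathcal{A}_f$ is final, a word is accepted exactly when it labels a run from $a$, and every prefix of an accepted word is again accepted; moreover the edges leaving a state $b$ are labelled $0,1,\dots,|f(b)|-1$ and lead to $f(b)[0],\dots,f(b)[|f(b)|-1]$. Consequently, for $n\ge1$, the length-$(n+1)$ words of $M$ are precisely the one-letter extensions $u'd$ of the length-$n$ words $u'\in M$, with $0\le d<|f(b)|$ where $b$ is the endpoint of the run of $u'$ (the leading-zero constraint being inherited by $u'$ from $u'd$). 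As all these words share the length $n+1$, their lexicographic order sorts them first by $u'$, following the order of $L_n$, and then by $d$; hence the block of extensions of a fixed $u'$ reaching $b$ spells $f(b)[0]\cdots f(b)[|f(b)|-1]=f(b)$, and concatenating over $u'\in L_n$ yields $g_{n+1}=f(g_n)$. The leading-zero exclusion plays a role only at the bottom level: $L_0=\{\varepsilon\}$ reaches $a$, its would-be extensions spell $f(a)=a\,s_1$, and discarding $d=0$ removes the initial letter $a$, leaving $g_1=s_1$.

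Finally I would close by a telescoping computation. The recursion gives $g_n=f^{n-1}(s_1)$ for $n\ge1$, and an immediate induction from $f(a)=a\,s_1$ shows $f^n(a)=a\,s_1\,f(s_1)\cdots f^{n-1}(s_1)=g_0g_1\cdots g_n$. Letting $n\to\infty$ yields $g_0g_1g_2\cdots=\lim_n f^n(a)=w$. Since the genealogical enumeration of $M$ is exactly $L_0,L_1,L_2,\dots$, the word of rank $m$ (that is, the $(m+1)$st word) is read from $a$ to the state $w_m$, which is the claim. I expect the only delicate point to be the bookkeeping in the middle step: matching the lexicographic order of the length-$(n+1)$ words to the order of $L_n$ and, within each, to the order of the letters of $f(b)$, while correctly isolating the single leading-zero exclusion at the base level. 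The remaining identities are routine.
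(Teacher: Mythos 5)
Your proof is correct, but note that there is nothing in the paper to compare it against: Proposition~\ref{pro:maes} is stated without proof, being quoted from the cited reference \cite{Maes} (see also Section~3.4 of \cite{cant}). What you have written is essentially a self-contained reconstruction of the standard argument behind that cited result: decompose the genealogical enumeration of $M$ into the lexicographically ordered level sets $L_n$; observe that for $n\ge 1$ the level-$(n+1)$ words are exactly the one-letter extensions of the level-$n$ words, with the lexicographic order within a block of extensions of a fixed $u'$ matching the order of the letters of $f(b)$ (because the labels out of $b$ are $0,\dots,|f(b)|-1$ in their natural order); deduce the recursion $g_0=a$, $g_1=s_1$, $g_{n+1}=f(g_n)$, the leading-zero exclusion acting only at the bottom level; and conclude by the telescoping identity $f^n(a)=g_0g_1\cdots g_n$. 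All steps check out. The one caveat worth recording is your standing hypothesis that $f$ is non-erasing and prolongable on $a$. This is indeed implicit in ``generating an infinite word by iterating $f$ on $a$'' and it covers the only instance the paper uses (the Raleigh morphism $1\mapsto 12$, $2\mapsto 3$, $3\mapsto 12$), but the theorem of \cite{Maes} is stated in greater generality (erasing morphisms are handled there by a further reduction that your argument does not address), so your proof establishes the proposition in the generality needed here rather than in the full generality of the cited source.
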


\begin{example}
    Observe that the DFA depicted in Figure~\ref{fig:raleigh} recognizes exactly the $F$-expansions of the integers. The first few words (not starting by zero) accepted by this automaton (the first elements in $M$) are given below. When reading these words from state $1$, we have indicated the corresponding reached state.
    $$\begin{array}{cccccccc}
        \varepsilon & 1 & 10 & 100 & 101 & 1000 & 1001 & 1010 \\
1&2&3&1&2&1&2&3\\
    \end{array}$$
The reader may observe that we have a prefix of the infinite word generated by $f$. So to get the $m$th symbol (where the first symbol has index $0$) of this word, one has to feed the DFA with $\rep_F(m)$.
\end{example}

\begin{theorem}\label{the:ral}
     The $3$-tuple $(a,b,c)$ with $a<b<c$ is a $\mathcal{P}$-position of the Raleigh game if and only if $\rep_F(a-1)=w00$, $\rep_F(b-1)=w01$, $\rep_F(c-1)=w010$ where $\rep_F$ is the $F$-expansion associated with the Fibonacci sequence.
\end{theorem}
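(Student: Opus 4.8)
The plan is to prove the claimed characterization by connecting the three defining recurrences for $A_m$, $B_m$, $C_m$ (from the previous lemma) to the synchronizing DFA associated with the Raleigh morphism via Proposition~\ref{pro:maes}. By the lemma, the positions of the $m$th symbols $1$, $2$, $3$ in the fixed point $w=12312123\cdots$ are exactly $A_m$, $B_m$, $C_m$. By Proposition~\ref{pro:maes}, the symbol of $w$ of index $k$ (indexing from $0$) is the state reached from the initial state $1$ when feeding the DFA of Figure~\ref{fig:raleigh} with $\rep_F(k)$, since that DFA recognizes the $F$-expansions. Thus the first step is to translate ``the index of the $m$th occurrence of symbol $1$ is $a$'' into ``reading $\rep_F(a-1)$ from state $1$ lands in state $1$, and this is the $m$th such word in genealogical order'' (the shift by $1$ accounts for $0$-indexing in Proposition~\ref{pro:maes} versus $1$-indexing of positions).

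The core of the argument is the synchronizing property noted just after Figure~\ref{fig:raleigh}: reading $00$, $01$, $010$ from any state leads to state $1$, $2$, $3$ respectively. First I would show that the three conditions $\rep_F(a-1)=w00$, $\rep_F(b-1)=w01$, $\rep_F(c-1)=w010$ force $a-1$, $b-1$, $c-1$ to produce symbols $1$, $2$, $3$ at their respective indices: because of synchronization, reading any prefix $w$ followed by $00$ (resp.\ $01$, $010$) reaches state $1$ (resp.\ $2$, $3$) independently of the state reached after $w$. So the three integers $a-1 = \val_F(w00)$, $b-1=\val_F(w01)$, $c-1=\val_F(w010)$ index occurrences of $1$, $2$, $3$ respectively. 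Second, I would verify that they correspond to the \emph{same} index $m$: the genealogical position of $w00$ among words landing in state $1$, of $w01$ among words landing in state $2$, and of $w010$ among words landing in state $3$, must all equal the same $m$. This is the combinatorial heart and follows from the structure of the morphism — symbol $2$ is produced only as the second letter of $f(1)=12$, and symbol $3$ only as $f(2)=3$, so the $m$th $1$, $m$th $2$, $m$th $3$ are generated in lockstep, which is precisely mirrored by appending $0$, $1$, and $010$ to the common prefix $w$.

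Conversely, I would take an arbitrary $\mathcal{P}$-position $(A_m,B_m,C_m)$ and exhibit the common word $w$. Since $A_m$ indexes the $m$th occurrence of symbol $1$, write $\rep_F(A_m-1)$ and argue, using the fact that the only incoming edges to state $1$ carry the suffix behaviour $00$ (reading $00$ synchronizes to $1$), that $\rep_F(A_m-1)$ ends in $00$; set $w$ to be the remaining prefix. Then the lockstep generation gives $\rep_F(B_m-1)=w01$ and $\rep_F(C_m-1)=w010$ by the same synchronization. Here I would lean on the explicit recurrences: $B_m=A_m+1$ and the gap structure of $C_m$ (increments $3$ or $5$ according to whether $A_m-A_{m-1}=2$) are exactly what the transitions $1\xrightarrow{1}2$ and the cycle through state $3$ encode, so the arithmetic of the recurrence is consistent with appending the digit strings $01$ and $010$.

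I expect the main obstacle to be the rigorous bookkeeping of the \emph{genealogical indexing} in the second step: showing that appending $00$, $01$, $010$ to the same prefix $w$ selects the $m$th word (in genealogical order) that reaches state $1$, $2$, $3$ respectively, for one common value of $m$. This requires checking that the genealogical enumeration of $F$-expansions reaching a fixed state is compatible with the enumeration of $F$-expansions reaching another state under these fixed suffix extensions — essentially a length/ordering argument on $F$-expansions. A clean way around the delicate indexing is to avoid Proposition~\ref{pro:maes} entirely for the arithmetic and instead verify the three conditions directly against the recurrences for $(A_m,B_m,C_m)$ by induction on $m$: check the base case $(1,2,4)$ against $\rep_F(0)=\varepsilon$ so that $\rep_F(A_1-1)=00$, $\rep_F(B_1-1)=01$, $\rep_F(C_1-1)=010$ (reading $w=\varepsilon$), and then show that the inductive step of appending the next $F$-digit to $w$ reproduces exactly the increments dictated by the recurrence. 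This reduces the whole theorem to a finite verification of how the Fibonacci successor acts on words of the form $w00$, $w01$, $w010$, which is routine once the synchronizing behaviour is in hand.
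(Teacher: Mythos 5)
Your first route is the paper's own: Proposition~\ref{pro:maes} plus the synchronizing property of the DFA reduce the theorem to a statement about genealogical indexing, and you correctly isolate that indexing statement as the heart of the matter. But you never actually prove it, and the concrete mechanism the paper uses is absent from your sketch: since $M=\rep_F(\mathbb{N})$ is prefix-closed and closed under appending $00$, $01$, $010$, and since appending a \emph{fixed} suffix preserves genealogical order, two consecutive words $w,w'$ of $M$ yield consecutive elements $w00,w'00$ of $F_1$ (and likewise for $F_2$, $F_3$); hence the three enumerations advance in lockstep, and it only remains to synchronize their indices at one explicitly checked anchor triple, which the paper takes to be $(6,7,11)$, i.e.\ $\rep_F(5,6,10)=(u00,u01,u010)$ with $u=10$, the first few triples being checked by direct inspection. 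Without this prefix-closure/order-preservation argument and an anchor, ``the $m$th $1$, $m$th $2$, $m$th $3$ are generated in lockstep'' is an assertion, not a proof.

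Your proposed ``clean way around'' has two genuine defects. First, the base case is wrong: the first $\mathcal{P}$-position of the Raleigh game is $(1,2,3)$, not $(1,2,4)$ (that is the Tribonacci game), and even $(1,2,3)$ matches the pattern $(w00,w01,w010)$ only after padding with leading zeroes, since $\rep_F(0)=\varepsilon$, $\rep_F(1)=1$, $\rep_F(2)=10$; this is exactly why the paper anchors its induction at $(6,7,11)$ rather than at the first triple. Second, the inductive step is \emph{not} ``a finite verification of how the Fibonacci successor acts on words of the form $w00$, $w01$, $w010$'': passing from the $m$th triple to the $(m+1)$st replaces $w$ by its genealogical successor $w'$ in $M$, and this successor can rewrite unboundedly many digits (e.g.\ $1010\mapsto 10000$), so nothing about it is determined by a bounded suffix. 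The correlation you need, namely that the $A$-increment is $2$ exactly when the $C$-increment is $3$, and $3$ exactly when it is $5$, is a global statement about carries; it can be proved, but it requires its own lemma (for instance: the differences $\delta_k$ between the numerical values of $w'0^k$ and $w0^k$ satisfy $\delta_{k+2}=\delta_{k+1}+\delta_k$ with $\delta_0=1$ and $\delta_1\in\{1,2\}$, forcing $(\delta_2,\delta_3)\in\{(2,3),(3,5)\}$), and you would additionally need the fact that every padded $F$-expansion ends in exactly one of $00$, $01$, $010$ in order to recover the mex recurrence. So your alternative is workable in principle, but it is not the routine finite check you describe, and as written it would fail at both the base case and the inductive step.
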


\begin{example}Table~\ref{tab:ral} illustrates the previous result. It is pretty easy to enumerate the $\mathcal{P}$-positions. Enumerate (by genealogical ordering) the words over $\{0,1\}$ avoiding the factor $11$ and ending with $0$: $10,100,1000,1010,\ldots$. For the $m$th element $w$ in this enumeration, consider the $3$-tuple of words $(w0,w1,w10)$.  Note that the three added suffixes all start with $0$.
\begin{table}[h]
    $$\begin{array}{|ccc|rrr|}
\hline
A_m & B_m & C_m &  \rep_F(A_m-1) &\rep_F(B_m-1) &\rep_F(C_m-1) \\
\hline
1&2&3&\varepsilon&1&10\\
4&5&8&100&101&1010\\
6&7&11&1000&1001&10010\\
9&10&16&10000&10001&100010\\
12&13&21&10100&10101&101010\\
\hline
\end{array}$$
        \caption{First $\mathcal{P}$-positions of the Raleigh game and the corresponding $F$-expansions.}
        \label{tab:ral}
    \end{table}
\end{example}

\begin{proof}
 We make use of Proposition~\ref{pro:maes}.  Let $M$ be the language made of the words not starting with $0$ and accepted by the automaton depicted in Figure~\ref{fig:raleigh} where all states are final. Note that $M=\rep_F(\mathbb{N})$. Hence the state reached when reading $\rep_F(m)$ provides the $m$th symbol of the word (index starting with $0$).

 The set $M\cap \{0,1\}^{\ge 4}$ is partitioned into four sets $F_1,F_2,F_3$ of words of length at least four:  those ending $00$, $01$ and $010$ respectively. Thus for all $m\ge 5$, $\rep_F(m)$ belongs to one of these sets. Moreover, if $\rep_F(m)$ belongs to $F_i$, then the $m$th symbol of the word is $i$.

 Since $M$ is prefix-closed, we know that if $w$ and $w'$ are two consecutive words in $M$ of length at least two (for the genealogical ordering), then $w00$ and $w'00$ are consecutive elements in $F_1$. From Proposition~\ref{pro:maes}, they correspond to two consecutive occurrences of the symbol $1$ in the infinite word. The same observation is made for elements in $F_2,F_3$ (for the words $w01$, $w'01$ and $w010$, $w'010$ respectively).

 From this observation, to finish the proof by induction --- enumerating by increasing genealogical order the words of $M$ --- we just need to find one $3$-tuple $(a,b,c)$ which is a $\mathcal{P}$-position of the Raleigh game
 and such that $\rep_F(a-1,b-1,c-1)=(u00,u01,u010)$. One can take $(6,7,11)$.  We have $$\rep_F(5,6,10)=(u00,u01,u010)\text{ with }u=10.$$
 The first few values have thus to be checked by direct inspection.
\end{proof}

\begin{corollary}\label{cor:tri}
    It is decidable whether the set of $\mathcal{P}$-positions of the Raleigh game is $1$-invariant.
\end{corollary}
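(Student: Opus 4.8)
The plan is to reduce to Theorem~\ref{the:main} (equivalently Theorem~\ref{thm:main}) by proving that the set of $\mathcal{P}$-positions of the Raleigh game is $F$-recognizable, where $F$ is the Fibonacci numeration system of Example~\ref{exa:fibo}, which is a Pisot system. Once recognizability is established and since this set contains $\mathbf{0}=(0,0,0)$ (it is $P_w$ in the sense of Definition~\ref{def:pw}), Theorem~\ref{the:main} yields at once that its $1$-invariance is decidable. So the whole task is the construction of a suitable finite automaton, exactly along the lines already used for Wythoff's game in Proposition~\ref{pro:st} and for the Tribonacci game.

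First I would invoke the syntactic characterization of Theorem~\ref{the:ral}: the ordered triples $(a,b,c)$ with $a<b<c$ that are $\mathcal{P}$-positions are precisely those for which $\rep_F(a-1)=w00$, $\rep_F(b-1)=w01$, $\rep_F(c-1)=w010$ for some word $w$. Up to the shift by one in each coordinate, I therefore want a DFA accepting the triples of $F$-expansions of the form $(w00,w01,w010)$. Because the third component is one symbol longer than the other two, the tuple convention pads the first two with a leading zero, so the automaton must in fact recognize $(0w00,0w01,w010)$. This is built just as the synchronizing automaton described in the proof of Proposition~\ref{pro:st}: reading the three tracks simultaneously, the states record enough of the last symbols read to check that the three components coincide on the common prefix $w$ and then split into the prescribed suffixes $00$, $01$, $010$. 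Intersecting with the automaton accepting valid $F$-expansions (words over $\{0,1\}$ avoiding the factor $11$, leading zeroes being harmless) guarantees that $w$, and hence each component, is a legal expansion; the small cases with $w=\varepsilon$ are absorbed by allowing leading zeroes.

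Next I would remove the uniform $-1$ in the three coordinates. As in the treatment of the Tribonacci game, this is innocuous: one may either use that $\{(i,i+1)\mid i\ge 0\}$ is $U$-recognizable for every Pisot system~\cite{Frougny,Frougny2}, or pass through Theorem~\ref{the:equiv} to the $F$-definable formula $\varphi(a-1,b-1,c-1)$ and note that the successor is definable in $\langle\mathbb{N},+\rangle$ (Remark~\ref{rem:var}), so the shifted set is again $F$-definable, hence $F$-recognizable. Finally, the full set of $\mathcal{P}$-positions is the union of the ordered set just obtained, all its coordinate permutations, and the single point $\mathbf{0}$. Since $F$-recognizable sets are closed under permutation of coordinates, finite unions, and contain every finite set, the resulting set $P_w$ is $F$-recognizable.

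There is no genuinely hard step here: the argument is a direct adaptation of Proposition~\ref{pro:st} and of the Tribonacci case, the only points deserving care being the leading-zero padding forced by the unequal suffix lengths ($w010$ against $w00$ and $w01$) and the bookkeeping of the $-1$ shift. With $P_w$ shown to be $F$-recognizable and containing $\mathbf{0}$, an application of Theorem~\ref{the:main} concludes that the $1$-invariance of the set of $\mathcal{P}$-positions of the Raleigh game is decidable.
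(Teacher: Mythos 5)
Your proposal is correct and takes exactly the route the paper intends: the paper states this corollary without a separate proof, treating it as immediate from Theorem~\ref{the:ral} combined with the automaton construction of Proposition~\ref{pro:st}, the handling of the $-1$ shift via $\{(i,i+1)\mid i\ge 0\}$ being $U$-recognizable (or successor definability), and an application of Theorem~\ref{the:main} --- precisely the steps you spell out, including the correct observation about the leading-zero padding forced by the suffix $w010$ being one symbol longer than $w00$ and $w01$.
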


\begin{remark}
    One can redo the proof of Theorem~\ref{the:pisot} in a way similar to the proof of Theorem~\ref{the:ral}. Let $M$ be the language made of the words not starting with $0$ and accepted by the automaton depicted in Figure~\ref{fig:pisotg}. The set $M\cap \{0,1\}^{\ge 2}$ is partitioned into three sets $F_1,F_2,F_3$ of words of length at least two: those not ending with $s$ nor $s1$, those ending with $s$, and those ending with $s1$ respectively.
\end{remark}

\end{document}